\documentclass[journal]{IEEEtran}
\usepackage{xcolor,soul,framed} 
\colorlet{shadecolor}{yellow}
\usepackage[pdftex]{graphicx}
\graphicspath{{../pdf/}{../jpeg/}}
\DeclareGraphicsExtensions{.pdf,.jpeg,.png}
\usepackage[cmex10]{amsmath}
\usepackage{algorithm}
\usepackage{algorithmic}
\usepackage{array}
\usepackage{tabularx}
\usepackage{mdwmath}
\usepackage{mdwtab}
\usepackage{eqparbox}
\usepackage{url}
\usepackage{amssymb}
\usepackage{amsthm}
\usepackage{amsmath}
\usepackage{amsthm}
\usepackage{booktabs}    
\usepackage{makecell}    
\usepackage{diagbox}
\usepackage{float}
\usepackage{titlesec}
\titleformat{\subparagraph}[runin]{\normalfont\normalsize\bfseries}{\thesubparagraph}{1em}{}
\titlespacing*{\subparagraph}{0pt}{3.25ex plus 1ex minus .2ex}{1em}
\definecolor{DBLACK}{RGB}{0,0,0}
\definecolor{DBLUE}{RGB}{0,0,255}
\newcommand{\td}[0]{\color{DBLACK}}

\theoremstyle{definition}

\theoremstyle{remark}

\theoremstyle{plain}
\newtheorem{lemma}{Lemma}
\newtheorem{proposition}{Proposition} 

\titlespacing{\section}{1pt}{1pt}{1pt}
\titlespacing{\subsection}{1pt}{1pt}{1pt}

\begin{document}
\bstctlcite{IEEEexample:BSTcontrol}
\title{\td{Revisiting Voltage and Synchronization Stability Analysis in Grid-Following Converter-Integrated Weak Grids: Insights from Non-Minimum-Phase Zeros } }

\author{
Fuyilong~Ma,~\IEEEmembership{Member,~IEEE},
Lidong~Zhang,~\IEEEmembership{Senior Member,~IEEE},
Wangqianyun~Tang,
Waisheng~Zheng,
Huanhai~Xin,~\IEEEmembership{Senior Member,~IEEE},
Linbin~Huang,~\IEEEmembership{Member,~IEEE},
{Lennart~Harnefors},~\IEEEmembership{Fellow,~IEEE}

\thanks{This work was supported by the Natural Science Foundation of China under Grant 52207110  (\textit {Corresponding author: Lidong Zhang}).

F. Ma, L. Zhang, W. Tang, and W. Zheng are with Electric Power Research Institute, China Southern Power Grid, Guangzhou 510080, China (e-mail: eemfyl@zju.edu.cn; \{zhangld4, tangwqy, zhengws\}@csg.cn).

H. Xin  and L. Huang are with  the College of Electrical Engineering, Zhejiang University, China. (e-mail: \{xinhh, hlinbin\}@zju.edu.cn). 

{L.Harnefors is with ABB Power Systems, 771 80 Ludvika, Sweden. (e-mail: lennart.harnefors@se.abb.com).}
}
}

\markboth{IEEE TRANSACTIONS ON POWER SYSTEMS}{}

\maketitle

\begin{abstract}
The increasing penetration of grid-following (GFL) converter-interfaced generators (CIGs) intensifies concerns over small-signal voltage and synchronization stability. While existing theories treat these two stability issues distinctly, practical wisdom in contrast employs a unified and static metric, short-circuit ratio (SCR), to assess both in weak grids.  
This paper aims to bridge this theory-practice gap by introducing the insight of non-minimum-phase (NMP) zeros. 
{\td First, we demonstrate that the two stability issues in weak grids
can be characterized within a common NMP-zero-based assessment framework: a zero at the origin corresponds to voltage instability, while low-frequency zeros impose fundamental constraints on synchronization dynamics. }
The traditional SCR is proven to be a special case of our proposed novel stability metric, NMP-zero (NMP-Z) factor, evaluated at the rated operating point. This establishes the theoretical foundation for the empirical success of SCR.
Building on this insight, we then develop a unified stability assessment method for multi-converter systems. 
The method retains the simplicity of SCR, requiring only the NMP-Z factor together with individual CIG dynamic models and enabling stability margin assessment under various operating points.
Our work provides a simple yet theoretically rigorous framework for stability analysis in CIG-integrated weak grids, with all theoretical findings and the proposed method validated through detailed time-domain simulations.
\end{abstract}

\begin{IEEEkeywords}
Weak grid, short-circuit ratio (SCR), voltage stability, small-signal synchronization stability, multi-converter power system (MCPS), non-minimum-phase (NMP) zero.
\end{IEEEkeywords}

\section{Introduction}

\IEEEPARstart{T}{he} large-scale integration of renewable energy is fundamentally altering power system dynamics through the widespread displacement of synchronous generators by converter-interfaced generators (CIGs) \cite{gu2022power}. Grid-following (GFL) controllers are generally employed to these CIGs.
This structural shift has introduced two typical converter-driven instability phenomena: non-periodic voltage collapse and low-frequency (including sub-synchronous frequency) oscillations \cite{hatziargyriouDefinitionClassificationPower2021}.
The former is generally classified as small‑signal voltage stability (or short-term voltage stability) \cite{huangSmallDisturbanceVoltageStability2020}, while the latter is regarded as small‑signal synchronization stability, due to the central role of synchronization units such as  phase-locked loop (PLL) \cite{cheng2022real},\cite{wang2020grid}. These stability issues constrain the integration scale of CIGs, necessitating careful planning and operation of converter‑based power systems.
{\td To mitigate these weak-grid stability issues from the converter-control side, grid-forming (GFM) control and advanced GFL control strategies have recently received considerable attention. GFM converters regulate their terminal voltage and frequency and therefore exhibit voltage-source-like characteristics \cite{wang2020grid,zhang2010analysis}. For GFL converters, representative
improvements include PLL retuning and impedance reshaping, as well as PLL-less strategies \cite{mohammedComparisonPLLBased2022,
mohammedEnhancedFrequency2023}. These approaches can improve the performance of GFL converters in weak grids. Nevertheless, their stability still depends on the grid strength, operating point, frequency deviation, and controller structure. Therefore, system-level screening and quantitative assessment of these converter-driven stability issues remain essential.}

{\td To assess these converter-driven stability issues, existing literature provides distinct theoretical frameworks. On the one hand, voltage stability, often associated with loss of equilibrium near the maximum power transfer limit, is typically examined by static power‑flow methods such as \textit{PV} or \textit{QV} curves \cite{lund2022operating},\cite{peng2025measuring},\cite{machowski2020power}.
On the other hand, synchronization stability is inherently dynamic, governed by controllers like PLL, and thus requires dynamic methods such as state-space \cite{wangSmallSignalStabilityAnalysis2018} or impedance-based analysis \cite{harneforsModelingThreePhaseDynamic2007,morrisAnalysisControllerBandwidth2021,liuOscillatoryStabilityCriterion2018}.
State-space analysis provides modal information but requires detailed internal models, whereas impedance-based analysis supports black-box frequency-domain modeling and can predict converter-grid interaction stability using Bode plots or the generalized Nyquist criterion \cite{skogestad2005multivariable} or detailed time-domain simulation\cite{IEEEStandardInterconnection}. However, these dynamic methods typically require detailed models or broadband frequency-response information of the relevant system components. 
This inevitably increases computational burden compared to static methods. The characteristics of the above stability analysis methods are summarized in TABLE ~\ref{Table:stability_methods}}.

However, the theoretical division stands in stark contrast to practical wisdom, where system operators rely on a unified and static metric, short-circuit ratio (SCR), to fast screen stability issues\cite{conseilinternationaldesgrandsreseauxelectriquesConnectionWindFarms2016}. A low SCR signals a "\textit{weak grid}", where high network impedance relative to the CIG capacity intensifies risks of both small‑signal voltage and synchronization instability {\cite{IEEEStandardInterconnection}}. For example, a very weak grid with $\text{SCR}\approx1.0$ is well-recognized as a severe challenge to converter operation \cite{morrisAnalysisControllerBandwidth2021}. While this practical approach elegantly unifies both stability concerns, it still lacks a rigorous theoretical foundation.
\vspace{-1em}
\begin{table}[H]
\centering
{\td
\caption{Comparison of Stability-Analysis Methods}
}
\label{Table:stability_methods}

\scriptsize
\renewcommand{\arraystretch}{1.12}
\setlength{\tabcolsep}{2.2pt}

\resizebox{\columnwidth}{!}{%
\begin{tabular}{@{}p{1.45cm}p{3.05cm}p{3.05cm}p{2.25cm}@{}}
\toprule
Method
& Advantages
& Limitations
& Applications
\\
\midrule

Power-flow (PV,QV curves)
&
Simple; low computational burden
&
Neglects converter dynamics
&
Voltage-stability screening \cite{lund2022operating},\cite{peng2025measuring},\cite{machowski2020power}
\\

State-space (eigenvalues analysis)
&
Provides eigenvalues, damping, and participation
&
Requires detailed internal models
&
Small-signal modal analysis \cite{wangSmallSignalStabilityAnalysis2018} 
\\

Impedance-based
&
Supports black-box modeling and frequency-domain assessment
&
Requires broadband impedance/admittance models
&
Converter-grid interaction analysis \cite{harneforsModelingThreePhaseDynamic2007,morrisAnalysisControllerBandwidth2021,liuOscillatoryStabilityCriterion2018}
\\

Time-domain Simulation
&
Captures nonlinear and transient responses
&
Case dependent and computationally intensive
&
Detailed validation \cite{IEEEStandardInterconnection}
\\
\bottomrule
\end{tabular}%
}
\end{table}
\vspace{-1em}
Recognizing this gap, recent theoretical efforts revisit the wisdom of SCR. Several SCR variants are proposed for voltage stability analysis such as generalized SCR (gSCR) \cite{peng2025measuring,dong2018small} and site‑dependent SCR (SDSCR) \cite{wuAssessingImpactRenewable2018a}, whose assessment results are correlated with those of static methods for multi-converter power systems (MCPS). For small-signal synchronization stability, while SCR is recognized as a critical parameter \cite{morrisAnalysisControllerBandwidth2021}, its static nature may not fully capture the wide-band dynamics of CIGs. To address this, frequency‑dependent SCR metrics are constructed via impedance methods, such as virtual dynamic SCR (VDSCR) \cite{xiaoDesignOrientedSmallSignalStability2025} and impedance margin ratio (IMR) \cite{zhuImpedanceMarginRatio2024}. However, these advanced metrics require detailed frequency spectra from all system components (i.e., grid impedance and converters' impedance), thereby complicating practical implementation under varying operating conditions and undermining the original simplicity of SCR. 

A clear gap between theory and practice remains open. While the practical utility of SCR lies in its static simplicity for unified assessment, it lacks theoretical rigor; conversely, extended metrics improve dynamic fidelity but sacrifice practical simplicity. To resolve this, we aim to address two fundamental questions: 1) What common theoretical basis allows a metric like SCR to signal both voltage and synchronization instability? The answer will seek the common cause of different stability issues observed in weak grids; 2) How can a static metric inherently account for dynamic oscillatory constraints, enabling small-signal synchronization stability assessment? Pursuing this answer could pave the way to tackle the practical challenge posed by numerous CIGs with various operating points. 

To bridge this gap, we introduce the concept of non‑minimum‑phase (NMP) zeros from feedback control theory. NMP zeros (i.e., right-half-plane zeros) in the system transfer function are known to limit control bandwidth and degrade dynamic performance \cite{skogestad2005multivariable}.
{\td Their presence in single-converter systems (SCPS) and correlation with weak grids have been reported \cite{zhang2010analysis, wuImpactNonMinimumPhaseZeros2021b}. 
However, these studies do not establish a strict analytical relationship between the NMP-zero location and SCR. A NMP-zero-based illustration of stability issues remains obscure due to the high-order modeling, especially for multiple converters.}
Here, we reinterpret small-signal voltage and synchronization stability in weak grids through the insight of NMP zeros, and then develop a novel stability margin assessment method. The main contributions are summarized as follows.
\vspace{-0.1cm}
\begin{itemize}
    \item We formulate the SCR-based practice wisdom by introducing the concept of NMP zeros. Unlike previous theoretical frameworks that treat small‑signal voltage and synchronization stability separately, we propose a novel and unified stability metric termed the NMP-zero (NMP-Z) factor. It illustrates the location of NMP zeros to the corresponding stability limits. Notably, the traditional SCR and gSCR are proved to be special cases of the proposed NMP‑Z factor under rated operating conditions, thereby providing a rigorous control‑theoretic foundation for these practical indices in assessing combined stability risks.
\end{itemize}

\begin{itemize}
    \item We propose a static-metric-based method for small-signal voltage and synchronization stability assessment in MCPS. The proposed NMP-Z factor as a static metric retains the simplicity of SCR-based approaches, while being applicable under various non‑rated operating conditions. Moreover, unlike existing frequency-dependent SCR methods, our approach can be implemented via a constructed critical single-converter subsystem, requiring only the frequency spectrum of individual converters rather than those of all system components. This offers a more convenient framework for the analysis  of weak grids with numerous converters.
\end{itemize}
\vspace{-0.1cm}

The rest of this paper is organized as follows. Section II briefly reviews the impact of NMP zeros based on the feedback control theory. Section III takes SCPS to present the unified role of NMP zeros in small-signal voltage and synchronization stability,
and introduces the proposed NMP-Z factor to establish the theoretical basis for SCR. 
Section IV extends the proposed metric to MCPS and develops the unified stability margin assessment method. Section V provides simulation results to verify the effectiveness and capability of our proposed method.

\textit{Notation}: Let $\mathbb{R}$  and $\mathbb{C}$ denote the set of real numbers and complex numbers, respectively; $j = \sqrt { - 1} $ denotes the imaginary unit; superscript $^{*}$ denotes the complex conjugate; For a matrix $A$,  $A^T$ and $A^H$ denote its transpose and Hermitian transpose, respectively; $\det(A)$ denotes determinant of a matrix; $\lambda(A)$ and $\sigma(A)$denote its eigenvalues and singular values, respectively. We shall denote its minimum eigenvalue by $\underline{\lambda}(A)$ and the maximum singular value by $\overline{\sigma}(A)$. For a transfer function or matrix $G(s)$,
$\|{G}(s)\|_{\infty }=\max_\omega \overline{\sigma}\{{G}(j \omega)\}$.
When the context is clear, we may use $\mathbb{O}$ and $\mathbb{I}$ to present the zero and the identity matrix with proper dimensions.

\section{Brief Overview of Limitations Imposed by NMP Zeros}\label{Sec.2}
This section briefly introduces {\td 
the fundamental limitations imposed by NMP zeros on the stability margin of single-input-single-output (SISO) and multiple-input-multiple-output (MIMO) feedback systems.} 
{\td \subsection{Limitations in SISO Feedback Systems}}
Consider the SISO linear time-invariant feedback control system shown in Fig.~\ref{Fig.1-siso_diagram}, where the plant $G(s)$ and the controller $K(s)$ are driven by the reference $r$. The controlled output is $y$, the control input is $u$, and the tracking error $e = r - y$ serves as the input to the feedback controller. The loop transfer function is $L(s) = G(s)K(s)$. 
A central measure of closed-loop performance is the \textit{sensitivity function}
\begin{equation}\label{Eq:SISO_sensitivity}
    S(s) = (1+L(s))^{-1},
\end{equation}
 which is the transfer function from $r$ to $e$ and reflects how reference disturbances are attenuated \cite{havre1998effect}. Good closed-loop performance requires $|S(j\omega)| \ll 1$ over a wide frequency range of interest, indicating effective disturbance rejection and tracking  \cite{skogestad2005multivariable}. 
{\td The \textit{sensitivity peak} denoted by $\|S\|_\infty=\max_{\omega}|S(j\omega)|$ 
also serves as a key index: in the SISO case,
\begin{equation}\label{Eq:SISO_Nyquist_distance}
    \frac{1}{\|S\|_\infty}=\min_{\omega}|1+L(j\omega)|,
\end{equation}
which is the shortest distance from the Nyquist plot of $L(s)$ to the critical point $(-1,0)$. }

\begin{figure}[htbp]
\vspace{-0.3cm}
  \begin{center}
  \includegraphics[width=2.8in]{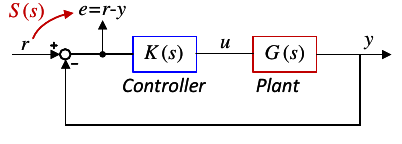}\\
\vspace{-0.3cm}
  \caption{Diagram of a feedback control system.}\label{Fig.1-siso_diagram}
  \end{center}
\end{figure}
\vspace{-0.5cm}

However, the presence of a \textit{NMP zero} in the plant imposes a fundamental lower bound on the achievable sensitivity peak, which must exceed unity under feedback, and may even lead to closed-loop instability when the resulting stability margin becomes insufficient \cite{havre1998effect}.
Suppose that the plant $G(s)$ has a real NMP zero at s=z, where $z>0$.
This gives the interpolation constraint $L(z) = 0
,S(z) = (1 + L(z))^{-1} = 1$. If the closed‑loop system is stable (i.e., $S(s)$ is analytic in the right‑half plane), 
the maximum-modulus theorem therefore yields the basic lower bound $\| S(s) \|_\infty \ge 1 $.
{\td A more restrictive limitation is obtained from the \textit{waterbed effect} \cite{havre1998effect}, as formalized in the following lemma.

\begin{lemma}[]\label{lem:siso_nmp_bound}
Consider the SISO feedback loop in Fig.~\ref{Fig.1-siso_diagram}. Suppose that $L(s)$ has no unstable poles and the closed loop is internally stable, and the plant has a real NMP zero $z>0$. If the sensitivity function satisfies the low-frequency specification
\begin{equation}\label{Eq:low_freq_spec_siso}
    |S(j\omega)|\leq M_L<1,\qquad |\omega|\leq \omega_L ,
\end{equation}
then the sensitivity peak satisfies
\begin{equation}\label{Eq:siso_nmp_lower_bound}
    \|S\|_\infty=\max_{\omega}|S(j\omega)|
    \geq \left(\frac{1}{M_L}\right)^{\frac{\theta_z}{\pi-\theta_z}}>1
\end{equation}
where $\theta_z=2\tan^{-1}\!\left(\frac{\omega_L}{z}\right)$.
\end{lemma}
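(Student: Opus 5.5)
The plan is to use the Poisson integral formula for $\log|S|$ at the NMP zero, which is the standard route to waterbed bounds. Since the closed loop is internally stable, $S(s)$ is analytic in the open right half-plane. Because $L$ has no unstable poles, $S=(1+L)^{-1}$ has no right-half-plane zeros: a zero of $S$ in $\mathrm{Re}\,s>0$ would require a pole of $L$ there. Hence $\log|S(s)|$ is harmonic in $\mathrm{Re}\,s>0$. We will also use that $S$ is proper, and bounded on the imaginary axis as a consequence of stability. Evaluating at the real point $s=z$, where the interpolation constraint gives $S(z)=1$, the Poisson integral for the half-plane yields
\begin{equation*}
0=\log|S(z)|=\frac{1}{\pi}\int_{-\infty}^{\infty}\log|S(j\omega)|\,\frac{z}{z^2+\omega^2}\,d\omega .
\end{equation*}
If $S$ has a Blaschke factor (it does not here), the right-hand side would only get larger. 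So in this setting the equality holds exactly.

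Next, split the integral into the band $|\omega|\le\omega_L$ and its complement. The weight $\frac{1}{\pi}\frac{z}{z^2+\omega^2}$ integrates to $1$ over $\mathbb{R}$. Over $|\omega|\le\omega_L$ it integrates to $\frac{2}{\pi}\tan^{-1}(\omega_L/z)=\theta_z/\pi$, so over the complement it integrates to $(\pi-\theta_z)/\pi$. We bound $\log|S(j\omega)|\le\log M_L$ on the band using \eqref{Eq:low_freq_spec_siso}, and $\log|S(j\omega)|\le\log\|S\|_\infty$ elsewhere. This gives
\begin{equation*}
0\le \frac{\theta_z}{\pi}\log M_L+\frac{\pi-\theta_z}{\pi}\log\|S\|_\infty .
\end{equation*}
Rearranging gives $\log\|S\|_\infty\ge \frac{\theta_z}{\pi-\theta_z}\log(1/M_L)$, which after exponentiation is \eqref{Eq:siso_nmp_lower_bound}. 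Strict inequality $>1$ follows because $M_L<1$ and $0<\theta_z<\pi$, the latter holding since $\omega_L>0$ and $z>0$.

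The main obstacle is justifying the Poisson representation on an unbounded domain. This requires $\log|S|$ to be harmonic in the open right half-plane and to satisfy a growth condition at infinity. For rational proper $S$ with no right-half-plane poles or zeros, I will map the half-plane to the unit disc via $w=(s-z)/(s+z)$. Then $\log|S|$ becomes a bounded harmonic function on the closed disc, apart from possible finitely many imaginary-axis zeros of $S$, which are integrable logarithmic singularities. The mean value property at $w=0$, pulled back to the imaginary axis, gives exactly the weighted integral above. Imaginary-axis zeros of $S$ only make $\log|S|$ more negative there and are handled by a limiting argument, so the inequality direction is preserved.
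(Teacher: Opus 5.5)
Your proof is correct and follows essentially the same route as the paper. The paper simply cites Freudenberg and Looze, whose bound rests on exactly this argument: the Poisson integral of $\log|S|$ at the NMP zero (the open-loop Blaschke factor is $1$ because $L$ has no unstable poles), then the same split of the weight $\frac{1}{\pi}\frac{z}{z^2+\omega^2}$ into $\theta_z/\pi$ and $(\pi-\theta_z)/\pi$. One wording point: imaginary-axis zeros of $S$ are harmless because $\log|S(z)|$ still equals (or, by upper semicontinuity and reverse Fatou as $r\to 1$ on the disc, is at most) the Poisson integral of the boundary values. The reason is not that they make $\log|S(j\omega)|$ more negative, which on its own would work against the inequality you need.
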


\noindent\textit{Proof:}
This lemma is a consequence of Theorem~4 in \cite{freudenberg1985right} under the assumption that $L(s)$ has no unstable poles.\hfill$\blacksquare$

Lemma~\ref{lem:siso_nmp_bound} clearly shows the role of the NMP-zero location. Since $0<M_L<1$, the NMP zero $z$ closer to the origin leads to a larger $\theta_z$, and hence a larger lower bound on the sensitivity peak in \eqref{Eq:siso_nmp_lower_bound}. Equivalently, the achievable stability margin $1/\|S\|_\infty$ becomes increasingly restrictive as the NMP zero approaches the origin. Consequently, a near-origin NMP zero makes it extremely difficult to achieve low-frequency sensitivity reduction while maintaining a sufficient stability margin. }

{\td \subsection{Limitations in MIMO Feedback Systems}
For clarity, the above discussion has been presented for SISO feedback systems. The same fundamental limitation can be extended to MIMO feedback systems \cite{chen1995sensitivity,freudenberg1985right}.

For a square MIMO loop transfer matrix $L(s)$, define the sensitivity function $S(s)=(\mathbb{I}+L(s))^{-1}$. The sensitivity peak is also a stability margin index in MIMO case:
\begin{equation}\label{Eq:MIMO_singular_margin}
    \|S\|_\infty
    =\max_{\omega}\overline{\sigma}\!\left(S(j\omega)\right)
    =\frac{1}{\min_{\omega}\underline{\sigma}\!\left(I+L(j\omega)\right)} 
\end{equation}
where $\overline{\sigma}(\cdot)$ and $\underline{\sigma}(\cdot)$ denote the maximum and minimum singular values, respectively. Since the generalized Nyquist criterion \cite{skogestad2005multivariable} shows that , when $L(s)$ has no open-right-half-plane poles, closed-loop stability requires
\begin{equation}\label{Eq:MIMO_GNC}
    \det(I+L(j\omega))\neq 0, \forall \omega .
\end{equation}
$1/\|S\|_\infty$ measures the smallest distance, in the spectral norm, from $I+L(j\omega)$ to singularity. This is the MIMO counterpart of the SISO Nyquist distance in \eqref{Eq:SISO_Nyquist_distance}.

The sensitivity-peak limitation imposed by the NMP-zero location in SISO feedback systems remains valid for MIMO systems. The following lemma establishes this extension.

\begin{lemma}\label{lem:mimo_nmp_bound}
Suppose that $L(s)$ in the MIMO system has no unstable poles, the closed loop is internally stable, and $z>0$ is a real NMP zero with a unit left zero direction $w_z$, i.e.,
\begin{equation}\label{Eq:left_zero_direction}
    w_z^{H}L(z)=0,\qquad \|w_z\|_2=1 .
\end{equation}
If the low-frequency sensitivity reduction satisfies
\begin{equation}\label{Eq:low_freq_spec_mimo}
    \overline{\sigma}\!\left(S(j\omega)\right)\leq M_L<1,
    \qquad |\omega|\leq \omega_L ,
\end{equation}
then
\begin{equation}\label{Eq:mimo_nmp_lower_bound}
    \|S\|_\infty
    \geq
    \max_{\omega}
    \left|w_z^{H}S(j\omega)w_z\right|
    \geq
    \left(\frac{1}{M_L}\right)^{\frac{\theta_z}{\pi-\theta_z}}>1
\end{equation}
\end{lemma}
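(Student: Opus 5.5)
The plan is to reduce the MIMO statement to the scalar argument behind Lemma~\ref{lem:siso_nmp_bound} by projecting the sensitivity onto the zero direction. Define the scalar function
\begin{equation*}
    f(s)=w_z^{H}S(s)w_z .
\end{equation*}
Since the closed loop is internally stable, $S(s)$ is analytic and bounded in the closed right-half plane, so $f(s)$ is too. The interpolation constraint follows from $w_z^H L(z)=0$. Write $w_z^H=w_z^H(\mathbb{I}+L(z))$. Right-multiplying by $S(z)=(\mathbb{I}+L(z))^{-1}$ gives $w_z^H S(z)=w_z^H$, and hence $f(z)=w_z^Hw_z=1$. Thus $f$ plays exactly the role that $S$ plays in the SISO case, where $S(z)=1$.

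The first inequality in \eqref{Eq:mimo_nmp_lower_bound} is immediate. For any unit vector $w$ we have $|w^H S(j\omega) w|\le \overline{\sigma}(S(j\omega))$, so taking the maximum over $\omega$ gives $\|S\|_\infty\ge \max_\omega|f(j\omega)|$.

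For the second inequality, the same Cauchy--Schwarz bound combined with \eqref{Eq:low_freq_spec_mimo} gives $|f(j\omega)|\le M_L$ for $|\omega|\le\omega_L$. I would then apply the Poisson integral inequality for $\log|f|$ at the real point $z$. The Poisson kernel for the right-half plane at $s=z$ is $P_z(\omega)=z/(z^2+\omega^2)$. Because $f$ is analytic, bounded, and satisfies $f(z)=1\neq0$, its outer/inner factorization gives
\begin{equation*}
    0=\log|f(z)|\le \frac{1}{\pi}\int_{-\infty}^{\infty}\log|f(j\omega)|\,\frac{z}{z^2+\omega^2}\,d\omega .
\end{equation*}
This is an inequality rather than an equality because right-half-plane zeros of $f$ contribute Blaschke factors of modulus at most one. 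I would split the integral at $|\omega|=\omega_L$ and note that $\frac1\pi\int_{-\omega_L}^{\omega_L}P_z\,d\omega=\frac{2}{\pi}\tan^{-1}(\omega_L/z)=\theta_z/\pi$, with total weight one. This yields
\begin{equation*}
    0\le \frac{\theta_z}{\pi}\log M_L+\Big(1-\frac{\theta_z}{\pi}\Big)\log\max_\omega|f(j\omega)| .
\end{equation*}
Rearranging gives $\max_\omega|f(j\omega)|\ge (1/M_L)^{\theta_z/(\pi-\theta_z)}$. This exponent is strictly positive because $0<\theta_z<\pi$ and $M_L<1$, so the bound is strictly greater than one.

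The main obstacle I expect is justifying the Poisson inequality for $f$. The function $f$ need not be outer and may have right-half-plane zeros, so I would rely on the standard Jensen-type inequality for bounded analytic functions in $H^\infty$ of the right-half plane. Alternatively, I would cite the MIMO version of Theorem~4 in \cite{freudenberg1985right} (see also \cite{chen1995sensitivity}) for this step, exactly as was done for Lemma~\ref{lem:siso_nmp_bound}, with the hypothesis that $L$ has no unstable poles ensuring that no additional Blaschke terms from plant poles appear.
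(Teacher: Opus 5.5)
Your proof is correct and follows the paper's own route. The paper likewise projects onto the zero direction and uses $w_z^H S(z)=w_z^H$, so that $s_z(s)=w_z^H S(s)w_z$ satisfies $s_z(z)=1$ and $|s_z(j\omega)|\le\overline{\sigma}(S(j\omega))\le M_L$ for $|\omega|\le\omega_L$; it then simply applies Lemma~\ref{lem:siso_nmp_bound} to $s_z$. The only difference is that you derive the scalar bound directly from the Poisson--Jensen inequality instead of citing the SISO lemma. This is slightly more careful, because $s_z$ is not literally the sensitivity of a SISO loop without unstable poles (it may have right-half-plane zeros), and your inequality form shows that such zeros only strengthen the bound.
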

\noindent\textit{Proof:}
See Appendix~\ref{App:lemMIMO}.

Lemma~\ref{lem:mimo_nmp_bound} shows that the SISO sensitivity-peak limitation in Lemma~\ref{lem:siso_nmp_bound} remains applicable to MIMO feedback systems. Although the limitation is manifested along the zero direction, the NMP-zero location still determines its severity. Therefore, an NMP zero closer to the origin yields a larger lower bound on $\|S\|_\infty$ and hence a smaller achievable stability margin $1/\|S\|_\infty$. This forms the control-theoretic basis for the unified stability-margin analysis in Sections~III and~IV.
}

%
\section{Role of NMP zeros in Unified Stability Margin Assessment for SCPS}\label{Sec.3}
In this section, the insight from NMP zeros is employed to characterize the small-signal voltage and synchronization stability limits in a SCPS, based on the Jacobian transfer matrix model. Moreover, we propose the NMP-Z factor as a novel and unified stability metric. It helps to understand why the SCR-based practice wisdom is effective for screening these stability issues in weak grids. 

Without loss of generality, a three-phase grid-connected CIG is considered, as shown in Fig.~\ref{Fig.3-SCPS_diagram}. The CIG applies the most commonly-used control scheme designed in the controller’s dq frame (i.e., $d^cq^c$ frame, the angular frequency of which is the PLL’s angular frequency $\omega_{PLL}$). The scheme includes PLL, active power controller (APC), reactive power controller (RPC), dc voltage controller (DVC), and inner current controller \cite{camposNovelGramianbasedStructurepreserving2022}.
\subsection{System modeling Based on Jacobian Transfer Matrix}
To analyze the small-signal dynamics of the SCPS, the system is modeled using a Jacobian transfer matrix approach, which characterizes the power transfer dynamics in converter-based power systems \cite{zhang2010analysis}. As shown in Fig.~\ref{Fig.4-SCPS_Block}, the linearized SCPS is represented as a two-input, two-output feedback control system. The inputs are the variations in active and reactive power references , $[\Delta P_{\mathrm{ref}},\Delta Q_{\mathrm{ref}}]^{T}$, and the outputs are the corresponding power injections of CIG, $[\Delta P,\Delta Q]^{T}$. In this framework, the ac grid acts as the plant, while the CIG serves as the controller. The manipulated variables are the normalized voltage magnitude deviation and the phase angle deviation, $[U^{-1}\Delta U,\Delta \delta]^{T}$. The system dynamics are characterized by two Jacobian transfer matrices: $J_{\mathrm{NET}}(s)$ for the AC network and $J_{\mathrm{CIG}}(s)$ for the converter-interfaced generator. The detailed modeling procedure is described next. 
\vspace{-0.6cm}
\begin{figure}[H]
  \begin{center}
  \includegraphics[width=3.0in]{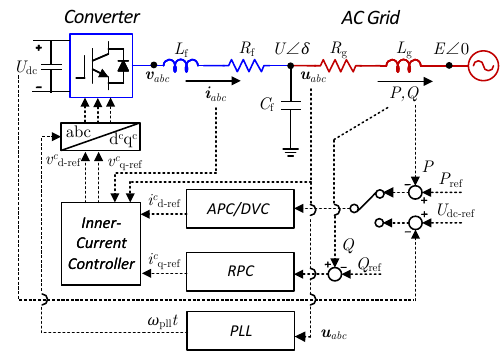}
  \caption{One-line diagram of a three-phase converter connected to the ac grid.}\label{Fig.3-SCPS_diagram}
  \end{center}
\end{figure}
\vspace{-0.9cm}
\begin{figure}[H]
  \begin{center}
  \includegraphics[width=3.0in]{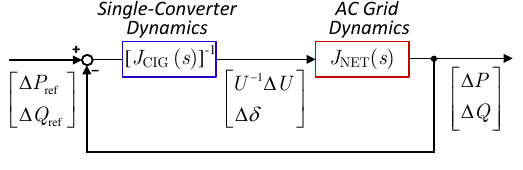}
  \caption{ Block diagram of the linearized SCPS model as a closed-loop feedback system.}\label{Fig.4-SCPS_Block}
  \end{center}
\end{figure}
\vspace{-0.7cm}

\textit{1) AC Grid Modeling in SCPS:}
Firstly, we start from the static case (i.e., $s=0$), and ${J_{{\rm{NET}}}}(0)$ is derived as the power-flow Jacobian matrix.  {\td It represents the linearization of the well-known active-reactive power transfer equations as follow
\begin{equation}\label{Eq.3}
\begin{aligned}
P &=
\frac{R\left(U^2-EU\cos\delta\right)+XEU\sin\delta}
{R^2+X^2},\\
Q &=
\frac{X\left(U^2-EU\cos\delta\right)-REU\sin\delta}
{R^2+X^2}.
\end{aligned}
\end{equation}

\begin{equation}\label{Eq.4}
J_{{\rm{NET}}}(0)
=
\left[
\begin{array}{cc}
P+U^2B_L\dfrac{\tau}{1+\tau^2}
&
-Q+U^2B_L\dfrac{1}{1+\tau^2}
\\[8pt]
Q+U^2B_L\dfrac{1}{1+\tau^2}
&
P-U^2B_L\dfrac{\tau}{1+\tau^2}
\end{array}
\right]
\end{equation}
where $R=R_g, X = \omega_0 L_g $ denote the line resistance and reactance with $\omega_0$ being the nominal grid angular frequency, respectively; $B_L = 1/X$ represents the line susceptance and $\tau=R/X$ represents the line resistance-to-reactance ratio.

To extend the analysis to the dynamic performance, the electromagnetic transients of the line should be considered, analytically yielding the Jacobian transfer matrix ${J_{{\rm{NET}}}}(s)$ as follow
\begin{equation}\label{Eq.5}
    \left[ \begin{array}{l}
\Delta P\\
\Delta Q
\end{array} \right] = {J_{\rm{NET}}}(s)\left[ \begin{array}{l}
{U^{ - 1}}\Delta U\\
\Delta \delta 
\end{array} \right]
\end{equation}
\begin{equation}\label{Eq.6}
    {J_{\rm{NET}}}(s) = \left[ {\begin{array}{*{20}{c}}
{P + {U^2}{B_L}\beta (s)}&{ - Q + {U^2}{B_L}\alpha (s)}\\
{Q + {U^2}{B_L}\alpha (s)}&{P - {U^2}{B_L}\beta (s)}
\end{array}} \right]
\end{equation}
where $\alpha(s)=
\frac{1}{(s/\omega_0+\tau)^2+1}$ and $\beta(s)=
\frac{s/\omega_0+\tau}
{(s/\omega_0+\tau)^2+1}$ . The derivation of ${J_{{\rm{NET}}}}(s)$ is detailed in Appendix~\ref{App:derivGrid}. 
Notably, by setting  $s=0$ in \eqref{Eq.6}, \eqref{Eq.4} can be recovered as the static form. }

\textit{2) Converter Modeling:}
As shown in Fig.~\ref{Fig.4-SCPS_Block}, the CIG is represented by a $2 \times 2$ Jacobian transfer matrix $J_{\mathrm{CIG}}(s)$ that incorporates the dynamics of the PLL and other controllers.
This formulation can be obtained from the well-established impedance/admittance model \cite{yang2019comparison} and shares its advantage: it also effectively overcomes the black-box modeling challenge posed by commercially confidential converters. This is because $J_{\mathrm{CIG}}(s)$ characterizes the broadband frequency-domain relationship between power disturbances and nodal voltage responses. Thus, its frequency-response $J_{\mathrm{CIG}}(j\omega)$ can be identified via frequency-scanning tests \cite{zhang2010analysis}.

\subsection{NMP Zeros in Grid with Single Converter}
Modeling the SCPS as a MIMO feedback system via Jacobian transfer matrices enables the analytical formulation of its NMP zeros. As outlined in Section II and depicted in Fig.~\ref{Fig.4-SCPS_Block}, these NMP zeros originate from the grid Jacobian matrix $J_{\mathrm{NET}}(s)$. The presence of NMP zeros imposes fundamental performance limitations, which are quantified by the maximum singular value of the sensitivity matrix, $\bar{\sigma}\{S(j\omega)\}$, and its peak $\| S(s) \|_\infty$. The sensitivity function of SCPS can be formulated as  $ S(s) = {({\mathbb{I}} + L(s))^{ - 1}}$, where $ L(s) = J_{\rm{{CIG}}}^{ - 1}(s){J_{\rm{NET}}}(s)$ represents the open-loop transfer fucntion matrix.

The high order of the MIMO transfer matrix $J_{\mathrm{NET}}(s)$ poses a significant challenge to deriving an analytical expression for its zeros. To overcome this, a grid \textit{complex Jacobian transfer matrix} $\tilde{J}_{\mathrm{NET}}(s)$ is introduced via a transformation of $J_{\mathrm{NET}}(s)$ into a complex form. This matrix compactly represents the relationship between complex power and voltage variations. Notably, the transformation preserves NMP zeros, ensuring $\tilde{J}_{\mathrm{NET}}(s)$ shares the same zeros as the original $J_{\mathrm{NET}}(s)$.

Specifically, (\ref{Eq.5}) can be first rewritten into the polar coordinate form as 
\begin{equation}\label{Eq.9}
\left[ \begin{array}{l}
{S^{ - 1}}\Delta S\\
\Delta \varphi 
\end{array} \right] = {T_\varphi }{J_{\rm{NET}}}(s)\left[ \begin{array}{l}
{U^{ - 1}}\Delta U\\
\Delta \delta 
\end{array} \right]
\end{equation}
where $S = \sqrt {{P^2} + {Q^2}} $  and $\varphi  = \arctan \frac{Q}{P}$  represent the apparent power and the power factor synchronization, respectively; ${S^{ - 1}}\Delta S$ represents the normalized variation of the apparent power relative to the given operating power;  ${T_\varphi }$ represents the polar transformation matrix which has
\[
\begin{bmatrix}
{S^{ -1}}\Delta S \\
\Delta \varphi 
\end{bmatrix} 
= 
S^{-1}
\begin{bmatrix}
\cos \varphi & \sin \varphi \\
-\sin \varphi & \cos \varphi
\end{bmatrix}
\begin{bmatrix}
\Delta P \\
\Delta Q
\end{bmatrix} 
:= {T_\varphi }
\begin{bmatrix}
\Delta P \\
\Delta Q
\end{bmatrix}.\]
Then, (\ref{Eq.9}) can be transformed into the complex-voltage and complex-power coordinate form, and the defined complex Jacobian transfer matrix of the grid has
\begin{equation}\label{Eq.10}
    \left[ \begin{array}{l}
{S^{ - 1}}\Delta S + j\Delta \varphi \\
{S^{ - 1}}\Delta S - j\Delta \varphi 
\end{array} \right] = {\tilde J_{\rm{NET}}}(s)\left[ \begin{array}{l}
{U^{ - 1}}\Delta U + j\Delta \delta \\
{U^{ - 1}}\Delta U - j\Delta \delta 
\end{array} \right]
\end{equation}
where ${\tilde J_{\text{NET}}}(s) = {T_j}[{T_\varphi }{J_{\text{NET}}}(s)]T_j^H$  represents the grid complex Jacobian transfer matrix by using the unitary transformation matrix $T_j=\frac{1}{\sqrt{2}} \begin{bmatrix}
 1 & j\\
 1 & -j
\end{bmatrix}$ , ${T_j}T_j^{H} = {\mathbb{I}}$ .

The above transformation yields ${\tilde J_{\rm{NET}}}(s)$  with a special property: its diagonal elements are all unity, and its off-diagonal entries form complex-conjugate pairs, as follow
\begin{equation}\label{Eq.11}
    {\tilde J_{\text{NET}}}(s) = \left[ {\begin{array}{*{20}{c}}
1&{\gamma (s){{\tilde S}^{ - 1}}{B_L}}\\
{{\gamma ^*}(s){{\tilde S}^{* - 1}}{B_L}}&1
\end{array}} \right]
\end{equation}
where $\tilde S: = (P + jQ){/}{U^2} = (S{{/}}{U^2}){e^{j\varphi }}$ represents the normalized complex power \cite{he2024complex}; $\gamma (s) = \beta (s) + j\alpha (s)$ and ${\gamma ^*}(s) = \beta (s) - j\alpha (s)$ are complex transfer functions \cite{harneforsModelingThreePhaseDynamic2007}.

Leveraging this property in \eqref{Eq.11}, the challenging process of finding the zeros $z$ of MIMO transfer matrices by solving its determinant equation can be simplified as
\begin{equation}\label{Eq.12}
\hspace*{-0.3em}
    0\! = \!\det (J_{\text{NET}}(z)) = \det (\tilde J_{\rm{NET}}(z))    
       = 1 - \gamma (z){\gamma ^*}(z){\rho_z}
\end{equation}
\vspace{-1em} 
\begin{equation}\label{Eq.13}
       {\rho _z} = ({\tilde S^{-1}}{B_L}) \cdot ({\tilde S^{*-1}}{B_L}) > 0
\end{equation}
where the second equality in (\ref{Eq.12}) holds thanks to the non-singular transformation matrices in  (\ref{Eq.10}), while the third equality is derived using the Schur complement \cite{zhang2006schur};
{\td $\gamma(z)\gamma^*(z)={\beta ^2}(z)+{\alpha ^2}(z)
=\frac{\omega_0^2}{(z+\tau\omega_0)^2+\omega_0^2},$ ;
${\rho _z}$  represents an important positive parameter associated with the solution for NMP zeros, and will be discussed in Section III.C. 

Solving \eqref{Eq.12} and \eqref{Eq.13} gives that $\tilde{J}_{\mathrm{NET}}(s)$ and $J_{\text{NET}}(s)$ share a pair of real zeros. when $\rho_z>1+\tau^2$, yielding an NMP zero locates in right-half plane
\begin{equation}\label{Eq.14}
   z = -\tau\omega_0 \pm \omega_0\sqrt{\rho_z-1},
   z_{\text{NMP}} =-\tau\omega_0+\omega_0\sqrt{\rho_z-1}.
\end{equation}
Particularly, for the commonly adopted approximation $R\ll X,\tau \approx 0$ \cite{machowski2020power}, the NMP zero can approximately reduce to $z_{\text{NMP}} = \omega_0\sqrt{\rho_z-1}$.}

\subsection{Role of NMP Zero and SCR in Unified Stability Margin}
We now illustrate the role of NMP zeros in the small-signal stability analysis of SCPS, encompassing both voltage and synchronization stability and provide a theoretical explanation for the SCR-based experience.

\noindent\textbf{Definition 1.} (Non-minimum-phase-zero factor)
\textit{Given the NMP zero of the grid Jacobian transfer matrix, formulated as ${z_{\rm{NMP}}} = -\tau\omega_0+\omega_0\sqrt{\rho_z-1} $ , the parameter $\rho_z$  is defined as the grid NMP-zero factor (abbreviated to \textit{NMP-Z factor}). This key parameter pinpoints the location of the NMP zero and is crucial for the small-signal stability and dynamic performance of converter-based power systems.}

Based on Definition 1, the NMP-Z factor for SCPS is given in \eqref{Eq.13}. It serves a unified stability metric for small-signal voltage and synchronization stability, as illustrated below.

\textit{1) Small-Signal Voltage Stability:}
Its boundary can be characterized by 
{\td ${\rho_z}=1+\tau^2$, or equivalently  $z_{\text{NMP}}=0$ (the NMP zero at the origin). 
This condition signifies the singularity of the static power-flow Jacobian matrix ($\det J_{\text{NET}}(0)=0$ in \eqref{Eq.12}), which is a direct indicator of voltage instability \cite{lund2022operating,machowski2020power}. The following criterion should be satisfied to ensure the stable quasi-static or static operation of SCPS: 
\begin{equation}\label{Eq.15}
    {z_{\text{NMP}}} > 0,{\rho _z} > 1+\tau^2
\end{equation}
Particularly, the boundary approximately reduces to ${\rho_z}=1$ when $R\ll X,\tau \approx 0$ hold.
}

\textit{2) Small-Signal Synchronization Stability:} It concerns oscillations primarily induced by the weak-grid limitations for PLL dynamics \cite{cheng2022real}. The NMP-Z factor  ${\rho _z}$ in \eqref{Eq.13} integrates the key weak-grid factors such as operating power and line reactance. As the corresponding $z_{\mathrm{NMP}}$ approaches the origin, the limitation within the PLL bandwidth (encompassing low and sideband frequencies \cite{wang2018harmonic}) intensifies, resulting in a high sensitivity peak and eventual oscillation  instability. Thus, the following criterion should be satisfied: 
\begin{equation}\label{Eq.16}   
    {z_{\text{NMP}}} > {z_c},{\rho _z} > \rho _{zc}
\end{equation}
where  ${z_c}$  and $ {\rho _{zc}}$  represent the thresholds of NMP zero and NMP-Z factor on the small-signal synchronization stability boundary (i.e., $1/||S(s)|{|_\infty } = 0$), linked to control parameters of the given CIG.

The stability criteria derived from the proposed NMP-Z factor establish a theoretical basis for the popular SCR-based practical experience. The SCR for SCPS in Fig.~\ref{Fig.3-SCPS_diagram} can be expressed 
\begin{equation}\label{Eq.17}
    SCR = \frac{{{S_{ac}}}}{{{P_N}}} = \frac{{U_N^2/{Z_{{\rm{th}}}}}}{{{P_N}}} = P_N^{ - 1}{B_L}
\end{equation}
where $S_{ac}$ is the short-circuit capacity from the grid at the CIG bus; $U_N = 1.0 \text{p.u.}$ represents the nominal voltage; $Z_{\rm{th}}\approx X=1/B_L$ represents the grid Thevenin impedance magnitude;  $P_N$ represents the rated active power of CIG.  

\noindent\textbf{Remark 1.} (Why SCR works)
\textit{From the perspective of NMP zeros, SCR emerges a special case of the proposed NMP-Z factor at the rated operating point of CIG ($P = P_N$, $Q = 0$, $U = U_N = 1.0$ p.u.). This fundamental connection explains the broad empirical utility of SCR in screening both small-signal voltage and synchronization stability limits. Substituting these conditions into \eqref{Eq.13} and using \eqref{Eq.17} yields}
\begin{equation}\label{Eq.18}
    {\rho _z} = (P_N^{ - 1}{B_L}) \cdot (P_N^{ - 1}{B_L}) = SCR^2
\end{equation}
\textit{Therefore, the NMP-Z factor $\rho_z$ also generalizes SCR, with validity extending to both rated and off-nominal operating conditions.}

{\td It should be mentioned that the proposed NMP-Z-factor-based framework is developed primarily for GFL converters operating in weak grids. 
In contrast, GFM converters may exhibit voltage source characteristics and introduce the different instability mechanisms in strong grids, where the NMP zero is generally located farther away from the origin and therefore impose a much less restrictive limitation on the achievable control performance.}

\noindent\textbf{Example 1.}(A Single-Converter Power System) 
\textit{The SCPS in Fig. 3 is considered with converter parameters given in Appendix~\ref{App:paraSystem}. A initial non‑rated operating point is set as $P=0.8$ p.u., $Q=-0.2$ p.u., $U=0.95$ p.u., and lossless line reactance $X=0.5$ p.u. for the simplicity (equivalent to $\text{SCR}=2.0$ at the rated condition). By varying the operating point, both small‑signal synchronization and voltage instabilities are reported, as shown in Fig.~\ref{Fig.5-SCPS_PQplot} and Fig.~\ref{Fig.6-SCPS_Senplot}, respectively.}

\textit{In Fig.~\ref{Fig.5-SCPS_PQplot}(a), the active power is stepped from $P=0.8$ p.u. to $1.05$ p.u. at 1s intervals (PLL bandwidth = 15 Hz). At $P=1.05$ p.u., a 12.8 Hz power oscillation emerges, consistent with reported observations of synchronization instability \cite{cheng2022real}. 
In Fig.~\ref{Fig.5-SCPS_PQplot}(b), the reactive power is stepped down from $Q=-0.2$ p.u. to $-0.38$ p.u. The low reactive power at $Q=-0.38$ p.u. triggers a non‑periodic instability, reflecting the typical voltage instability pattern \cite{lund2022operating}.}

\textit{These time‑domain responses are explained by the sensitivity peak induced by NMP zeros. Fig.~\ref{Fig.6-SCPS_Senplot}(a) shows the Bode magnitude of $\bar\sigma\{S(j\omega)\}$ for $P=0.8$ p.u. and $P=1.05$ p.u. As $\rho_z$ decreases from 4.79 to 1.26, the NMP zero moves closer to the origin, raising $\bar\sigma\{S(j\omega)\}$ and pushing $\|S(s)\|_\infty$ near 20 dB ($1/||S(s)|{|_\infty } \approx 0$). The resulting high sensitivity peak near 12.8 Hz correlates with the observed oscillation in  Fig.~\ref{Fig.5-SCPS_PQplot}(a), demonstrating how the NMP zero located near the fundamental angular frequency sideband  (i.e., $z_\text{NMP}=0.51\omega_0$) constrains PLL dynamics. 
Fig.~\ref{Fig.6-SCPS_Senplot}(b) compares $\bar\sigma\{S(j\omega)\}$ for $Q=-0.2$ p.u. and $Q=-0.38$ p.u. With $\rho_z = 1.01 \approx 1$ (for $Q=-0.38$ p.u.), the NMP zero lies very near the origin (i.e., $z_\text{NMP}=0.1\omega_0$), causing $\bar\sigma\{S(j\omega)\}$ to exceed 0 dB near the zero frequency. This indicates the  inability within the quasi-static or static time scale, matching the power-step response in Fig.~\ref{Fig.5-SCPS_PQplot}(b).}

\textit{{\td In addition, throughout these operating-point changes, SCR emerges a special case of the proposed NMP-Z factor at the rated operating point and thus remains constant at 2.0. It fails to indicate the stability limits, whereas the proposed NMP-Z factor can capture them under non‑rated conditions.}}
\vspace{-1.3em}
\begin{figure}[H]
\centering
\includegraphics[width=3.0in]{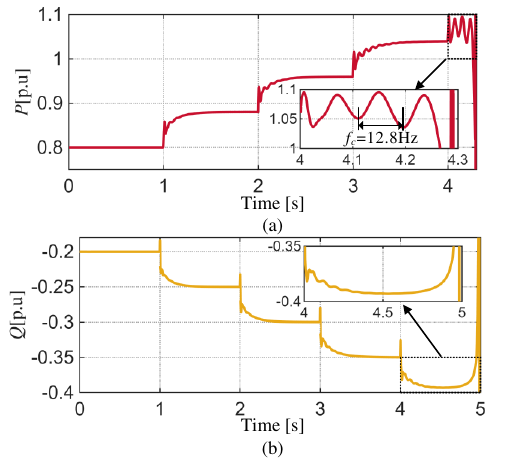}
\caption{ Time-domain responses of SCPS under step changes in active and reactive power. (a) active power output with step increases from $P=0.8$p.u. to $P=1.05$p.u. and small-signal synchronization instability induces oscillations. (b) reactive power output with step decreases  from $Q=-0.2$p.u. to $Q=-0.38$p.u. and small-signal voltage instability emerges.}\label{Fig.5-SCPS_PQplot}
\end{figure}
\vspace{-2.3em}
\begin{figure}[H]
\centering
\includegraphics[width=3.0in]{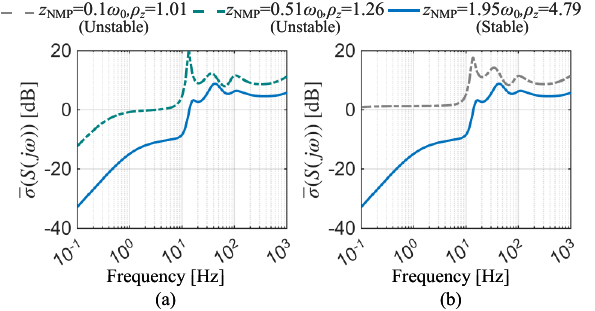}
\caption{Bode magnitude plots  of  $\bar \sigma \{ S(j\omega )\} $ for different operating points. (a)  comparison for  $P=0.8$p.u. and $P=1.05$p.u.; (b)  comparison for  $Q=-0.2$p.u. and $Q=-0.38$p.u., respectively.}\label{Fig.6-SCPS_Senplot}
\end{figure}
\vspace{-0.5cm}
\section{Unified Stability Margin Assessment Extended for MCPS with NMP Zeros}
This section extends the results from SCPS to MCPS and develops a unified stability margin assessment method based on the proposed NMP-Z factor. 
With loss of generality, consider the MCPS with $n$ converter buses coupled though the ac network, as shown in Fig.~\ref{Fig.7-MCPS_Diagram}. The network can be abstracted as an undirected graph ${\cal G} = G({\cal V},{\cal E})$, where $\cal V$ and $\cal E$ denote the sets of buses and transmission lines, respectively \cite{dorflerKronReductionGraphs2013b}. Each bus $i \in {\cal V}$ can be characterized by by $({P_i},{Q_i},{U_i},{\delta _i})$, representing its active power, reactive power, voltage magnitude and phase angle.
The network topology and line reactance reciprocals are encoded in the 
grounded Laplacian matrix $\boldsymbol{B} \in \mathbb{R}^{n \times n}$ 
\cite{dorflerKronReductionGraphs2013b}. Let ${\cal E}$ denote the set of 
lines. For each line $(i,j)\in{\cal E}$, the off-diagonal entries of 
$\boldsymbol{B}$ are defined as
$B_{ij}=B_{ji}=-1/X_{ij}<0$, where $X_{ij}$ is the reactance of line 
$(i,j)$. For $(i,j)\notin{\cal E}$, one has $B_{ij}=B_{ji}=0$. The diagonal 
entries are given by
$B_{ii}=\sum_{j\neq i}(-B_{ij})+b_{i0}>0$, where $b_{i0}$ denotes the shunt 
susceptance at node $i$.
{\td For analytical tractability in the MCPS, all transmission lines are assumed to have an identical resistance-to-reactance ratio, i.e.,$\tau=R_{ij}/X_{ij},\forall (i,j)\in{\cal E}$. }
\vspace{-0.47em}
\begin{figure}[H]
\centering
\hspace*{-1em}  
\includegraphics[width=3.5in]{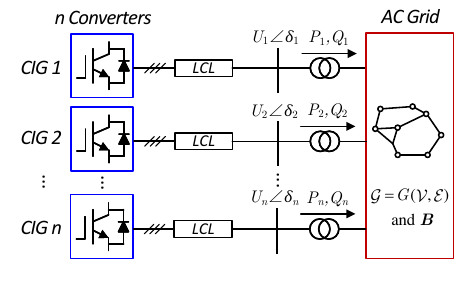}
\vspace{-0.6cm}
\caption{One-line diagram of multiple converters connected to the ac grid.}\label{Fig.7-MCPS_Diagram}
\end{figure}
\vspace{-0.5cm}
\begin{figure}[H]
\centering
\includegraphics[width=3.5in]{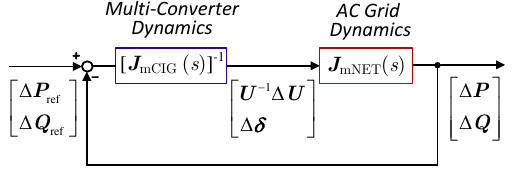}
\vspace{-0.5cm}
\caption{Block diagram of the linearized MCPS model as a closed-loop feedback system. }\label{Fig.8-MCPS_Block}
\end{figure}
\subsection{Jacobian Transfer Matrix of MCPS}
Following the approach in Section III, the small‑signal dynamics of the MCPS are modeled using Jacobian transfer matrices within a MIMO feedback loop, as shown in Fig.~\ref{Fig.8-MCPS_Block}. The input vectors are the active- and reactive-power reference disturbances $\Delta \boldsymbol{P}_{\text{ref}}$ and $\Delta \boldsymbol{Q}_{\text{ref}}$; the output vectors are the corresponding converter power variations $\Delta \boldsymbol{P} = [\Delta P_1,\dots,\Delta P_n]^{T}$ and $\Delta \boldsymbol{Q} = [\Delta Q_1,\dots,\Delta Q_n]^{T}$. The manipulated variables are the normalized voltage magnitude vector $\boldsymbol{U}^{-1}\Delta\boldsymbol{U} = [U_1^{-1}\Delta U_1,\dots,U_n^{-1}\Delta U_n]^{T}$ and the angle vector $\Delta\boldsymbol{\delta} = [\Delta\delta_1,\dots,\Delta\delta_n]^{T}$. The dynamics are characterized by two $2n \times 2n$ Jacobian transfer matrices: $\boldsymbol{J}_{\mathrm{mNET}}(s)$ for the AC network and $\boldsymbol{J}_{\mathrm{mCIG}}(s)$ for CIGs. The detailed derivation follows.

The grid Jacobian transfer matrix in MCPS can be derived
\begin{equation}\label{Eq.19}
\left[ \begin{array}{l}
\Delta \boldsymbol{P}\\
\Delta \boldsymbol{Q}
\end{array} \right] = {\boldsymbol{J}_{\text{mNET}}}(s)\left[ \begin{array}{l}
{\boldsymbol{U}^{ - 1}}\Delta \boldsymbol{U}\\
\Delta \boldsymbol{\delta} 
\end{array} \right]
\end{equation}

\begin{equation}\label{Eq.20}
\hspace*{-0.5em}  
\begin{array}{c}
\boldsymbol{J}_{\text{mNET}}(s) \!=\! \left[ {\begin{array}{*{20}{c}}
{{\mathop{\rm Re}\nolimits} \{ \boldsymbol{\tilde{Y}}\} }&{ - {\mathop{\rm Im}\nolimits} \{ \boldsymbol{\tilde{Y}}\} }\\
{{\mathop{\rm Im}\nolimits} \{ \boldsymbol{\tilde{Y}}\} }&{{\mathop{\rm Re}\nolimits} \{ \boldsymbol{\tilde{Y}}\} }
\end{array}} \right]\left[ {\begin{array}{*{20}{c}}
{\beta (s)\mathbb{I}}&{\alpha (s)\mathbb{I}}\\
{\alpha (s)\mathbb{I}}&{ - \beta (s)\mathbb{I}}
\end{array}} \right]\\
 + \left[ {\begin{array}{*{20}{c}}
\boldsymbol{P}&{ - \boldsymbol{Q}}\\
\boldsymbol{Q}&\boldsymbol{P}
\end{array}} \right]
\end{array}
\end{equation}
where $\tilde{\boldsymbol{Y}} \in \mathbb{C}^{n \times n}$ is the grid complex susceptance matrix, with elements $\tilde{Y}_{ij} = U_i U_j B_{ij} e^{j\delta_{ij}}$, $\delta_{ij} = \delta_i - \delta_j$. Its real and imaginary parts are $\operatorname{Re}\{\tilde{Y}_{ij}\} = U_i U_j B_{ij} \cos\delta_{ij}$ and $\operatorname{Im}\{\tilde{Y}_{ij}\} = U_i U_j B_{ij} \sin\delta_{ij}$, respectively. The converter active- and reactive-power outputs form the diagonal matrices $\boldsymbol{P} = \operatorname{diag}\{P_i\}$ and $\boldsymbol{Q} = \operatorname{diag}\{Q_i\}$. The detailed derivation of \eqref{Eq.19} and \eqref{Eq.20} is provided in Appendix ~\ref{App:derivGrid}.

The Jacobian transfer matrix of CIGs can be expressed as a blocked matrix 
\begin{equation}\label{Eq.21}
\hspace*{-0.5em}  
\boldsymbol {J}_{\text{mCIG}}(s) = \left[ {\begin{array}{*{20}{c}}
{\text{diag}\{ {{J}_{\text{CIG}i,11}}(s)\} }&{\text{diag}\{ { {J}_{\text{CIG}i,12}}(s)\} }\\
{\text{diag}\{ { {J}_{\text{CIG}i,21}}(s)\} }&{\text{diag}\{ { {J}_{\text{CIG}i,22}}(s)\} }
\end{array}} \right]
\end{equation}
where $J_{\text{CIG}i,11}(s)$, $J_{\text{CIG}i,12}(s)$, $J_{\text{CIG}i,21}(s)$, and $J_{\text{CIG}i,22}(s)$ denote the entries of the $2 \times 2$ Jacobian transfer matrix $J_{\text{CIG}i}(s)$ of the $i$-th converter.

\subsection{NMP Zeros in Grid with Multiple Converters}
As the plant in the MCPS feedback loop shown in Fig.~\ref{Fig.8-MCPS_Block}, the NMP zeros of ${\boldsymbol{J}_\text{mNET}}(s)$  impose the fundamental limitations on the system dynamic performance and stability. To obtain their analytical expressions, we extend the approach in  Section III to MCPS and introduce the corresponding complex Jacobian transfer matrix form $ \boldsymbol{\tilde{J}}_\text{mNET}(s)$:
\begin{equation}\label{Eq.22}
\hspace*{-0.4em}  
\left[ \begin{array}{l}
{\boldsymbol{S}^{ - 1}}\Delta\boldsymbol{S} + j\Delta \boldsymbol{\varphi} \\
{\boldsymbol{S}^{ - 1}}\Delta\boldsymbol{S} - j\Delta \boldsymbol{\varphi}  
\end{array} \right] \!= \!{\boldsymbol{\tilde{J}}_\text{mNET}}(s)\left[ \begin{array}{l}
{\boldsymbol{U}^{ - 1}}\Delta \boldsymbol{U} + j\Delta \boldsymbol{\delta} \\
{\boldsymbol{U}^{ - 1}}\Delta \boldsymbol{U} - j\Delta \boldsymbol{\delta} 
\end{array} \right]
\end{equation}
where
\begin{equation}\label{Eq.23}
\begin{array}{c}
\boldsymbol{\tilde {J}}_\text{mNET}(s) = {T_{mj}}({T_{m\varphi }}{\boldsymbol{J}_\text{mNET}}(s))T_{mj}^H\\
 = \left[ {\begin{array}{*{20}{c}}
{ \mathbb{I}}&{\gamma (s){{\boldsymbol{\tilde{S}} }^{ - 1}}\boldsymbol{\tilde{Y}}}\\
{{\gamma ^*}(s){{\boldsymbol{\tilde{S}}}^{* - 1}}{{\boldsymbol{\tilde {Y}}}^*}}&{\mathbb{I}}
\end{array}} \right]
\end{array}
\end{equation}
represents the grid complex Jacobian transfer matrix transformed  from  $ \boldsymbol{J}_\text{mNET}(s)$; the transformation matrices are ${T_{m\varphi }} = \left[ {\begin{array}{*{20}{c}}
{{\boldsymbol{S}^{ - 1}}}&\mathbb{O}\\
\mathbb{O}&{{\boldsymbol{S}^{ - 1}}}
\end{array}} \right]\left[ {\begin{array}{*{20}{c}}
{{\rm{diag}}\{ \cos {\varphi _i}\} }&{{\rm{diag}}\{ \sin {\varphi _i}\} }\\
{ - {\rm{diag}}\{ \sin {\varphi _i}\} }&{{\rm{diag}}\{ \cos {\varphi _i}\} }
\end{array}} \right]$ and the unitary matrix ${T_{{\rm{m}}j}} = \frac{1}{{\sqrt 2 }}\left[ {\begin{array}{*{20}{c}}
{{\mathbb{I}}}&{j{\mathbb{I}}}\\
{{\mathbb{I}}}&{ - j{\mathbb{I}}}
\end{array}} \right]$(i.e., ${T_{{\rm{m}}j}}T_{{\rm{m}}j}^H = \mathbb{I}$);  The diagonal matrix of  complex power is $\tilde{\boldsymbol{S}} = \boldsymbol{P} + j\boldsymbol{Q}$, with $\boldsymbol{S} = \operatorname{diag}\{S_i\}$ being the apparent power matrix; $\Delta \boldsymbol{S} = {[\Delta {S_1},...,\Delta {S_n}]^T}$ and $\Delta \boldsymbol{\varphi}  = {[\Delta {\varphi _1},...,\Delta {\varphi _n}]^T}$ represent the converters’ variation vectors of the apparent power and the power factor, respectively.

\begin{lemma}[]\label{lem:eigenvalue_Heq}
The matrix $\boldsymbol{H}_{\mathrm{eq}} := \boldsymbol{\tilde{S}}^{-1}\boldsymbol{\tilde{Y}}\boldsymbol{\tilde{S}}^{*-1}\boldsymbol{\tilde{Y}}^{*}$ is diagonalizable with positive, simple eigenvalues. Hence, there exists a nonsingular matrix $\boldsymbol{W}$ such that
\begin{equation}\label{Eq.24}
    {\boldsymbol{W}^{ - 1}}{\boldsymbol{H}}_\text{eq}\boldsymbol{W} = {\boldsymbol{\Lambda}_\text{eq} } = {\text{diag}}\{ {\lambda _i}\} .
\end{equation}
where the eigenvalues satisfy $0 < \lambda_1 \le \dots \le \lambda_n$.
\end{lemma}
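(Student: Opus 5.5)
The plan is to exploit the factored structure of $\boldsymbol{\tilde{Y}}$ and reduce $\boldsymbol{H}_\text{eq}$, by a diagonal similarity, to a product of the form $\boldsymbol{K}\boldsymbol{B}\boldsymbol{K}\boldsymbol{B}$ with $\boldsymbol{B}$ the grounded Laplacian. Such a product is in turn similar to the square of a real symmetric positive definite matrix.

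First I would write $\tilde{Y}_{ij}=U_iU_jB_{ij}e^{j\delta_i}e^{-j\delta_j}$, so that $\boldsymbol{\tilde{Y}}=\boldsymbol{D}\boldsymbol{B}\boldsymbol{D}^{H}$ with $\boldsymbol{D}=\mathrm{diag}\{U_ie^{j\delta_i}\}$. Its entrywise conjugate is $\boldsymbol{\tilde{Y}}^{*}=\boldsymbol{D}^{*}\boldsymbol{B}\boldsymbol{D}$. All of $\boldsymbol{\tilde{S}}$, $\boldsymbol{D}$ and $\boldsymbol{D}^*$ are diagonal and therefore commute, which gives
\begin{equation*}
\boldsymbol{D}^{-1}\boldsymbol{H}_\text{eq}\boldsymbol{D}=\boldsymbol{\tilde{S}}^{-1}\boldsymbol{B}\,\boldsymbol{\Lambda}\,\boldsymbol{B},\qquad \boldsymbol{\Lambda}:=\mathrm{diag}\Big\{\frac{U_i^2e^{-2j\delta_i}}{P_i-jQ_i}\Big\}.
\end{equation*}
Writing $P_i+jQ_i=S_ie^{j\varphi_i}$, the $i$-th diagonal entry of $\boldsymbol{\tilde{S}}^{-1}$ is $S_i^{-1}e^{-j\varphi_i}$, and that of $\boldsymbol{\Lambda}$ is $(U_i^2/S_i)e^{j(\varphi_i-2\delta_i)}$.

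The key step is to remove the phase factors. Let $\boldsymbol{E}=\mathrm{diag}\{e^{j\theta_i}\}$ be a further diagonal similarity; conjugation multiplies entry $B_{ij}$ by $e^{j(\theta_j-\theta_i)}$. I would aim to choose $\theta_i$ so that both diagonal factors become real positive, namely $\boldsymbol{K}_1=\mathrm{diag}\{1/S_i\}$ and $\boldsymbol{K}_2=\mathrm{diag}\{U_i^2/S_i\}$, while $\boldsymbol{B}$ stays real.

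This is exactly where I expect the main obstacle. For a general network and operating point, $\boldsymbol{A}\boldsymbol{A}^*$ (with $\boldsymbol{A}=\boldsymbol{\tilde{S}}^{-1}\boldsymbol{\tilde{Y}}$) only has eigenvalues that are real nonnegative or occur in conjugate pairs. Positivity therefore requires the phases to be compatible. I would first try to show that the residual phases cancel along every cycle of the product $\boldsymbol{B}\boldsymbol{\Lambda}\boldsymbol{B}$. Failing that, I would invoke the standard weak-grid working assumptions: uniform power factor $\varphi_i\equiv\varphi$ and small angle differences $\delta_{ij}\approx 0$, so that $\boldsymbol{\tilde{Y}}\approx \boldsymbol{U}\boldsymbol{B}\boldsymbol{U}$ is real. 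Under these assumptions the phases $e^{\mp j\varphi}$ from $\boldsymbol{\tilde{S}}^{-1}$ and $\boldsymbol{\tilde{S}}^{*-1}$ cancel exactly, and the assumption would be stated explicitly in the proof.

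Once $\boldsymbol{H}_\text{eq}\sim\boldsymbol{K}_1\boldsymbol{B}\boldsymbol{K}_2\boldsymbol{B}$ with $\boldsymbol{K}_1,\boldsymbol{K}_2$ positive diagonal, the argument finishes as follows.
\begin{itemize}
\item A further similarity gives $\boldsymbol{K}_1\boldsymbol{B}\boldsymbol{K}_2\boldsymbol{B}\sim\boldsymbol{K}_2^{1/2}\boldsymbol{B}\boldsymbol{K}_1\boldsymbol{B}\boldsymbol{K}_2^{1/2}=\boldsymbol{C}^{T}\boldsymbol{C}$, where $\boldsymbol{C}=\boldsymbol{K}_1^{1/2}\boldsymbol{B}\boldsymbol{K}_2^{1/2}$.
\item The grounded Laplacian $\boldsymbol{B}$ is nonsingular, since it is symmetric, weakly diagonally dominant and irreducibly grounded, hence positive definite. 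So $\boldsymbol{C}$ is nonsingular and $\boldsymbol{C}^T\boldsymbol{C}$ is symmetric positive definite.
\item Consequently $\boldsymbol{H}_\text{eq}$ is diagonalizable with real positive eigenvalues; the matrix $\boldsymbol{W}$ is $\boldsymbol{D}\boldsymbol{E}\boldsymbol{K}_2^{-1/2}$ times the orthogonal eigenvector matrix of $\boldsymbol{C}^T\boldsymbol{C}$.
\end{itemize}
Simplicity is not forced by this structure. I would treat it as generic, since repeated eigenvalues form a measure-zero set of parameters. For the smallest eigenvalue $\lambda_1$, which is the one used later, I would attempt a Perron--Frobenius argument: the inverse $(\boldsymbol{C}^T\boldsymbol{C})^{-1}=\boldsymbol{K}_2^{-1/2}\boldsymbol{B}^{-1}\boldsymbol{K}_1^{-1}\boldsymbol{B}^{-1}\boldsymbol{K}_2^{-1/2}$ is entrywise positive, because $\boldsymbol{B}^{-1}>0$ for an irreducible M-matrix. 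Its largest eigenvalue $1/\lambda_1$ is then simple.
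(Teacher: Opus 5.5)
\textbf{The gap.} Your argument proves the lemma only under hypotheses the statement does not contain, namely uniform power factor $\varphi_i\equiv\varphi$ and $\delta_{ij}\approx 0$. The step meant to avoid them cannot work as described. A diagonal similarity $\boldsymbol{E}$ commutes with the diagonal factors, so it never makes them real; it can only move their phases into $\boldsymbol{B}$, which then stops being real. Your concern that $\boldsymbol{A}\boldsymbol{A}^{*}$ may have non-real eigenvalues is valid for an arbitrary $\boldsymbol{A}$, but not for this one. The apparent phase mismatch also comes partly from an algebra slip: in fact $\boldsymbol{D}^{-1}\boldsymbol{H}_{\mathrm{eq}}\boldsymbol{D}=\tilde{\boldsymbol{S}}^{-1}\boldsymbol{B}\boldsymbol{\Lambda}\boldsymbol{B}\boldsymbol{D}^{2}$, and you dropped the trailing $\boldsymbol{D}^{2}$. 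Conjugating the other way gives $\boldsymbol{D}\boldsymbol{H}_{\mathrm{eq}}\boldsymbol{D}^{-1}=\boldsymbol{M}\boldsymbol{B}\boldsymbol{M}^{*}\boldsymbol{B}$ with $\boldsymbol{M}:=\tilde{\boldsymbol{S}}^{-1}\boldsymbol{D}^{2}=\mathrm{diag}\{(U_i^2/S_i)e^{j(2\delta_i-\varphi_i)}\}$. So the two diagonal factors are exact complex conjugates of each other at every operating point.

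\textbf{The missing idea.} $\boldsymbol{B}$ need not stay real; it only needs to stay Hermitian positive definite. Since $\boldsymbol{M}^{*}=\boldsymbol{M}^{H}$, the matrix $\boldsymbol{M}\boldsymbol{B}\boldsymbol{M}^{H}$ is Hermitian positive definite. Hence $\boldsymbol{D}\boldsymbol{H}_{\mathrm{eq}}\boldsymbol{D}^{-1}$ is a product of two Hermitian positive definite matrices, and via $\boldsymbol{B}^{1/2}$ it is similar to $\boldsymbol{C}\boldsymbol{C}^{H}$ with $\boldsymbol{C}=\boldsymbol{B}^{1/2}\boldsymbol{M}\boldsymbol{B}^{1/2}$ nonsingular. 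This is exactly your last bullet with $\boldsymbol{C}^{H}$ in place of $\boldsymbol{C}^{T}$. It gives $\boldsymbol{W}=\boldsymbol{D}^{-1}\boldsymbol{B}^{-1/2}\boldsymbol{V}$, where $\boldsymbol{V}$ is the unitary eigenvector matrix of $\boldsymbol{C}\boldsymbol{C}^{H}$. Equivalently, without any similarity: $\tilde{\boldsymbol{Y}}$ and $\tilde{\boldsymbol{Y}}^{*}=\tilde{\boldsymbol{Y}}^{T}$ are Hermitian positive definite and $\tilde{\boldsymbol{S}}^{*-1}=(\tilde{\boldsymbol{S}}^{-1})^{H}$, so $\boldsymbol{H}_{\mathrm{eq}}=\bigl(\tilde{\boldsymbol{S}}^{-1}\tilde{\boldsymbol{Y}}(\tilde{\boldsymbol{S}}^{-1})^{H}\bigr)\tilde{\boldsymbol{Y}}^{*}$ directly. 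Diagonalizability and positivity therefore hold with no restriction on $\varphi_i$ or $\delta_{ij}$.

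This is also what the paper's proof should say. It too strips phases with unitary diagonal matrices, but its identity \eqref{Eq.39} is linear in $\boldsymbol{B}$ while $\boldsymbol{H}_{\mathrm{eq}}$ is quadratic (for $n=1$ it gives $U^2B_{11}/S^2$ instead of $U^4B_{11}^2/S^2$). Moreover, $\boldsymbol{H}_{\mathrm{eq}}$ is generally not Hermitian, so ``positive definite'' must be read as ``similar to a Hermitian positive definite matrix''.

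You are right that simplicity is not forced, and as stated it actually fails. Take three identical converters in symmetric positions with $\boldsymbol{B}=(3c+b_0)\mathbb{I}-c\mathbf{1}\mathbf{1}^{T}$. Then $\boldsymbol{H}_{\mathrm{eq}}=(U^4/S^2)\boldsymbol{B}^2$, which has a double eigenvalue. The paper's word ``isolated'' does not establish simplicity either. Your Perron--Frobenius argument for $\lambda_1$ covers the only simplicity that Proposition~\ref{prop:preserveNMPzero} needs, but it works only in your phase-free case. With phases present, $(\boldsymbol{M}\boldsymbol{B}\boldsymbol{M}^{*}\boldsymbol{B})^{-1}$ is not entrywise positive, so simplicity of $\lambda_1$ must be assumed or proved separately.
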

\noindent\textit{Proof:}
See Appendix~\ref{App:lemHeq}.

Combing Lemma ~\ref{lem:eigenvalue_Heq}, the challenge of finding NMP zeros $z$ by solving the determinant of  $ \boldsymbol{\tilde{J}}_\text{mNET}(s)$  in (\ref{Eq.23}), originally involving a $2n\times2n$ high-order matrix can be decoupled into $n$ scalar equations:
\begin{equation}\label{Eq.25}
    \begin{aligned}
0 &= \det ( \boldsymbol{J}_\text{mNET}(z))  = \det (\boldsymbol{\tilde{J}}_\text{mNET}(z)) \\
  &= \det ( {\mathbb{I}} - \gamma (z){\gamma ^*}(z){\boldsymbol{H}_\text{eq}})\\
  &= \det\boldsymbol{W} ( {\mathbb{I}} - \gamma (z){\gamma ^*}(z){\boldsymbol{\Lambda}_\text{eq}})\boldsymbol{W}^{-1}\\
  &= \prod\nolimits_{i = 1}^n {(1 - \gamma (z){\gamma ^*}(z){\lambda _i})} 
    \end{aligned}
\end{equation}
where the second equality follows from the non-singular transformation matrices in \eqref{Eq.23}, the third from the Schur complement, and the fourth and fifth from Lemma ~\ref{lem:eigenvalue_Heq}.

{\td Then, $\boldsymbol{\tilde{J}}_\text{mNET}(s)$ (or ${\boldsymbol{J}}_\text{mNET}(s)$ ) has $n$ pairs of real zeros and yields  $n$ NMP zeros when $\lambda_1>1+\tau^2$:
\begin{equation}\label{Eq.26}
    {z_i} =  -\tau\omega_0\pm {\omega _0}\sqrt {{\lambda _i} - 1} ,{z_{\text{NMP}i}} = -\tau\omega_0+{\omega _0}\sqrt {{\lambda _i} - 1} 
\end{equation}
where $z_{\text{NMP}i}$ denotes the $i$-th NMP zero of MCPS. }

\subsection{Stability Margin based on Critical NMP Zero}
As established in Section~\ref{Sec.2}, the NMP zeros of the grid Jacobian transfer matrix approaching the origin fundamentally constrain the dynamic performance of MCPS, and consequently reduce the stability margin. We now propose a critical single-converter subsystem for MCPS to identify the critical NMP zero and its threshold, which determine small-signal voltage and synchronization stability boundaries.

\noindent\textbf{Definition 2.}(Critical NMP zero and NMP-Z factor for MCPS) \textit{Generalizing Definition 1 to the MCPS, the NMP‑Z factor is defined as}
\begin{equation}\label{Eq.27}
    \rho_z:=\lambda_1=\underline{\lambda}({\boldsymbol{H}_\text{eq}})= 
    \underline{\lambda}( {\boldsymbol{\tilde {S}}^{ - 1}}\boldsymbol{\tilde{Y}}{\boldsymbol{\tilde {S}}^{* - 1}}{\boldsymbol{\tilde {Y}}^*})
\end{equation}
\textit{where $\rho_z$ locates the critical NMP zero denoted by $z_{\mathrm{NMP}1}$, the one closest to the origin among all $n$ NMP zeros, and thus imposes the most restrictive sensitivity-peak limitation. }
\begin{figure}[htbp]
\centering
\includegraphics[width=3.5in]{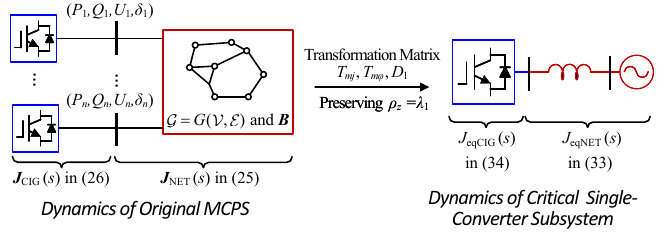}
\vspace{-2.3em}
\caption{Diagram of critical single-converter subsystem for MCPS. }\label{Fig.9-Subsystem_MCPS}
\end{figure}

To isolate this critical NMP zero, define 
$D_1 = [w_1,\mathbb{O};\mathbb{O},w_1^*],D_1^H{D_1} = \mathbb{I}$, where $w_1\in \mathbb{C}^{n},w_1^Hw_1=1$ is the normalized eigenvector of  $\boldsymbol{H}_\text{eq}$ corresponding to $\rho_z=\lambda_1$.  
As shown in Fig.~\ref{Fig.9-Subsystem_MCPS},
the equivalent grid and converter Jacobian transfer matrices of the resulting critical single-converter subsystem are 
\begin{equation}\label{Eq.28}
\bar J_{\text{eqNET}}(s) = D_1^H \boldsymbol{\tilde {J}}_\text{mNET}(s) D_1 = \left[ {\begin{array}{*{20}{c}}
1&{\gamma (s){\mu _1}}\\
{{\gamma ^*}(s)\mu _1^*}&1
\end{array}} \right],
\end{equation}
\begin{equation}\label{Eq.29}
\begin{aligned}
\bar J_{\text{eqCIG}}(s)
&=D_1^H\boldsymbol{\tilde J}_{\text{mCIG}}(s)D_1 \\
&=\left[
\begin{array}{@{}c@{\;}c@{}}
\sum\nolimits_i |w_{1i}|^2
\tilde J_{\text{CIG}i,11}(s)
&
\sum\nolimits_i w_{1i}^{*2}
\tilde J_{\text{CIG}i,12}(s)
\\[2pt]
\sum\nolimits_i w_{1i}^{2}
\tilde J_{\text{CIG}i,21}(s)
&
\sum\nolimits_i |w_{1i}|^2
\tilde J_{\text{CIG}i,22}(s)
\end{array}
\right],
\end{aligned}
\end{equation}
where 
$w_{1i}$  represents the $i$-th entry of  $w_1$;
${\mu_1} = w_1^H{\boldsymbol{\tilde S}^{-1}}{\boldsymbol{\tilde Y}}w_1^*$; 
$\boldsymbol{\tilde J}_{\text{mCIG}}(s) = {T_{mj}}({T_{m\varphi }}{\boldsymbol{J}_\text{mCIG}}(s))T_{mj}^H$ represents the complex Jacobian transfer matrix of the converters transformed from  
$\boldsymbol{J}_\text{mCIG}(s)$,
and ${\tilde J}_{\text{CIG}i,nm}(s)$  
denotes the $(n,m)$ entry of the
$i$th converter complex Jacobian transfer matrix. 
Thus, the equivalent converter in the critical subsystem is a modal combination of all converters weighted by $w_1$, whereas the equivalent grid retains the same $2\times2$ structure as the SCPS grid model in \eqref{Eq.11}.
{\td 
\begin{proposition}[Preservation of the NMP-Z factor]
\label{prop:preserveNMPzero}
The critical single-converter subsystem preserves the NMP-Z factor of the original MCPS associated with $\lambda_1$ as follow
\begin{equation}\label{Eq.prop1}
   \rho_z=\lambda_1=\mu_1\mu_1^*.
\end{equation}
\end{proposition}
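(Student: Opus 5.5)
The plan is to use only the algebraic structure of $\boldsymbol{H}_{\mathrm{eq}}$ together with the simplicity of its eigenvalues from Lemma~\ref{lem:eigenvalue_Heq}. Write $\boldsymbol{A}:=\boldsymbol{\tilde S}^{-1}\boldsymbol{\tilde Y}$. Here the superscript $^{*}$ on a matrix denotes entrywise conjugation, consistent with $\boldsymbol{\tilde S}^{*-1}$ in \eqref{Eq.23}. Then $\boldsymbol{A}^{*}=\boldsymbol{\tilde S}^{*-1}\boldsymbol{\tilde Y}^{*}$, so
\[
\boldsymbol{H}_{\mathrm{eq}}=\boldsymbol{A}\boldsymbol{A}^{*}
\quad\text{and}\quad
\mu_1=w_1^H\boldsymbol{A}w_1^{*}.
\]
The goal is therefore to show that $\boldsymbol{A}$ maps $w_1^{*}$ onto a scalar multiple of $w_1$, and that this scalar is exactly $\mu_1$.

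\textbf{Step 1: $\boldsymbol{A}w_1^{*}$ is an eigenvector (or zero).} Start from $\boldsymbol{A}\boldsymbol{A}^{*}w_1=\lambda_1 w_1$ with $\lambda_1$ real. Taking the entrywise conjugate gives $\boldsymbol{A}^{*}\boldsymbol{A}w_1^{*}=\lambda_1 w_1^{*}$. Multiplying on the left by $\boldsymbol{A}$ yields
\[
\boldsymbol{H}_{\mathrm{eq}}(\boldsymbol{A}w_1^{*})=\boldsymbol{A}(\boldsymbol{A}^{*}\boldsymbol{A}w_1^{*})=\lambda_1(\boldsymbol{A}w_1^{*}).
\]
Thus $\boldsymbol{A}w_1^{*}$ is either zero or an eigenvector of $\boldsymbol{H}_{\mathrm{eq}}$ for $\lambda_1$.

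\textbf{Step 2: use simplicity.} By Lemma~\ref{lem:eigenvalue_Heq}, $\lambda_1$ is simple, so its eigenspace is spanned by $w_1$. Hence $\boldsymbol{A}w_1^{*}=c\,w_1$ for some $c\in\mathbb{C}$. Premultiplying by $w_1^H$ and using $w_1^Hw_1=1$ gives $c=w_1^H\boldsymbol{A}w_1^{*}=\mu_1$.

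\textbf{Step 3: recover $\lambda_1$.} Conjugating $\boldsymbol{A}w_1^{*}=\mu_1 w_1$ gives $\boldsymbol{A}^{*}w_1=\mu_1^{*}w_1^{*}$. Substituting this into the eigen-equation,
\[
\lambda_1 w_1=\boldsymbol{A}(\boldsymbol{A}^{*}w_1)=\mu_1^{*}\boldsymbol{A}w_1^{*}=\mu_1^{*}\mu_1 w_1 .
\]
Since $w_1\neq 0$, it follows that $\lambda_1=\mu_1\mu_1^{*}$, which is \eqref{Eq.prop1}. As a consistency check, positivity of $\lambda_1$ then forces $\mu_1\neq 0$.

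\textbf{Main obstacle.} The only delicate point is Step 2. Without simplicity, $\boldsymbol{A}w_1^{*}$ could lie elsewhere in a multidimensional eigenspace, and $\mu_1$ would then capture only its projection onto $w_1$. That is exactly why the simplicity part of Lemma~\ref{lem:eigenvalue_Heq} is needed. I would also check that the conjugation conventions match those used in \eqref{Eq.28}, in particular that $\gamma^{*}(s)\mu_1^{*}$ indeed arises from $w_1^{T}\boldsymbol{\tilde S}^{*-1}\boldsymbol{\tilde Y}^{*}w_1$. With that settled, the equivalent grid in \eqref{Eq.28} has the same form as \eqref{Eq.11}, and its determinant zero condition $1-\gamma\gamma^{*}\mu_1\mu_1^{*}=0$ reproduces $z_{\mathrm{NMP}1}$.
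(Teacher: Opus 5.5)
Your proposal is correct and is essentially the paper's proof. It conjugates the eigen-equation of $\boldsymbol{H}_{\mathrm{eq}}=\boldsymbol{A}\boldsymbol{A}^{*}$, then invokes the simplicity of $\lambda_1$ from Lemma~\ref{lem:eigenvalue_Heq} to get $\boldsymbol{A}w_1^{*}=\mu_1 w_1$ and $\boldsymbol{A}^{*}w_1=\mu_1^{*}w_1^{*}$, and substitutes back to obtain $\lambda_1=\mu_1\mu_1^{*}$. The only cosmetic difference is that the paper (writing $\boldsymbol{N}$ for your $\boldsymbol{A}$) places $\boldsymbol{N}^{*}w_1$ and $w_1^{*}$ in the $\lambda_1$-eigenspace of $\boldsymbol{N}^{*}\boldsymbol{N}$, whereas you place $\boldsymbol{A}w_1^{*}$ directly in the $\lambda_1$-eigenspace of $\boldsymbol{H}_{\mathrm{eq}}$, which uses the simplicity hypothesis exactly as Lemma~\ref{lem:eigenvalue_Heq} states it.
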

\noindent\textit{Proof:}
See Appendix~\ref{App:propNMP}.

We next relate the critical subsystem to the closed-loop stability
boundary of MCPS for determining the threshold of the critical NMP-Z factor.
Define the characteristic matrices of the critical subsystem and the
original MCPS as
\begin{equation}\label{Eq.characteristicMatrices}
\begin{aligned}
\bar C_{\mathrm{eq}}(s)
&:=\bar J_{\mathrm{eqCIG}}(s)
  +\bar J_{\mathrm{eqNET}}(s),\\
\boldsymbol{\tilde C}_{\mathrm{m}}(s)
&:=\boldsymbol{\tilde J}_{\mathrm{mCIG}}(s)
  +\boldsymbol{\tilde J}_{\mathrm{mNET}}(s).
\end{aligned}
\end{equation}
and the corresponding sensitivity functions
\begin{equation}\label{Eq.sensitivityMCPS}
\begin{aligned}
S_{\mathrm{eq}}(s)
&:=
(
\mathbb I+
\bar J_{\mathrm{eqNET}}(s)
\bar J_{\mathrm{eqCIG}}^{-1}(s)
)^{-1},\\
\boldsymbol S_{\mathrm{m}}(s)
&:=
(
\mathbb I+
\boldsymbol{\tilde J}_{\mathrm{mNET}}(s)
\boldsymbol{\tilde J}_{\mathrm{mCIG}}^{-1}(s)
)^{-1}.
\end{aligned}
\end{equation}

Provided that
$\boldsymbol{\tilde J}_{\mathrm{mCIG}}(j\omega)$ is non-singular, 
the generalized Nyquist stability-boundary condition in Section ~\ref{Sec.2} can be equivalently expressed as (existing the critical mode $ s_c=j\omega_c$)
\begin{equation}\label{Eq.characteristicBoundary}
\begin{aligned}
\frac{1}{\|\boldsymbol S_{\mathrm m}(s)\|_{\infty}}=0
&\Leftrightarrow
\det(
\mathbb I+
\boldsymbol{\tilde J}_{\mathrm{mNET}}(j\omega_c)
\boldsymbol{\tilde J}_{\mathrm{mCIG}}^{-1}(j\omega_c)
)=0\\
&\Leftrightarrow
\det
\boldsymbol{\tilde C}_{\mathrm{m}}(j\omega_c)
=0
\end{aligned}
\end{equation}
Thus, MCPS reaches its stability boundary 
when its characteristic matrix becomes singular at some frequency
$\omega_c$, $\underline{\sigma}(\boldsymbol{\tilde C}_{\mathrm{m}}(j\omega_c))=0$, or one of the characteristic loci 
$\lambda(\boldsymbol{\tilde C}_{\mathrm{m}}(j\omega))$ crosses the origin.

To establish the relationship between the stability boundaries of the critical single-converter subsystem and the original MCPS, we introduce a nominal MCPS that retains the original grid model and has the converters'  Jacobian transfer matrix
\begin{equation}\label{Eq.nominalCIG}
\boldsymbol{\bar J}_{\mathrm{mCIG}}(s)
:=
\begin{bmatrix}
\bar J_{\mathrm{eqCIG},11}(s)\mathbb I
&
\bar J_{\mathrm{eqCIG},12}(s)\Phi_1
\\
\bar J_{\mathrm{eqCIG},21}(s)\Phi_1^*
&
\bar J_{\mathrm{eqCIG},22}(s)\mathbb I
\end{bmatrix}
\end{equation}
where $\Phi_1 = \operatorname{diag}\{w_{1i}/w_{1i}^{*}\}$. Accordingly, its characteristic matrix is
\begin{equation}\label{Eq.nominalCharacteristic}
\boldsymbol{\bar C}_{\mathrm{m}}(s)
:=
\boldsymbol{\bar J}_{\mathrm{mCIG}}(s)
+\boldsymbol{\tilde J}_{\mathrm{mNET}}(s).
\end{equation} 

\begin{lemma}[]
\label{lem:nominalBoundary} 
Suppose that the critical single-converter subsystem reaches its stability boundary at $s=j\omega_c$, and let $u_c$ be a unit vector satisfying $\bar C_{\mathrm{eq}}(j\omega_c)u_c=\mathbb O.$
Then, the corresponding direction $x_c:=D_1u_c$ satisfies  $\boldsymbol{\bar C}_{\mathrm m}(j\omega_c)x_c=\mathbb O$. 
Therefore, the stability boundary of the critical single-converter subsystem is embedded in that of the nominal MCPS, and the two systems reach the stability boundary at the same critical mode.
\end{lemma}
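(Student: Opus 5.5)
The plan is to prove an \emph{intertwining} identity. I would show that $D_1$ maps the $2\times2$ critical subsystem into the $2n\times 2n$ nominal MCPS without loss, separately for the grid part and the converter part:
\begin{equation*}
\boldsymbol{\tilde J}_{\mathrm{mNET}}(s)D_1=D_1\bar J_{\mathrm{eqNET}}(s),\qquad
\boldsymbol{\bar J}_{\mathrm{mCIG}}(s)D_1=D_1\bar J_{\mathrm{eqCIG}}(s).
\end{equation*}
Adding the two identities gives $\boldsymbol{\bar C}_{\mathrm m}(s)D_1=D_1\bar C_{\mathrm{eq}}(s)$ for all $s$. Evaluating at $s=j\omega_c$ and applying both sides to $u_c$ then yields $\boldsymbol{\bar C}_{\mathrm m}(j\omega_c)x_c=D_1\bar C_{\mathrm{eq}}(j\omega_c)u_c=\mathbb O$. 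Because $D_1^HD_1=\mathbb I$, the vector $x_c$ is a unit vector, so it is nonzero. Hence $\det\boldsymbol{\bar C}_{\mathrm m}(j\omega_c)=0$, and by \eqref{Eq.characteristicBoundary} the nominal MCPS sits on its boundary at the same critical mode.

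\textbf{Grid part (main obstacle).} Write $A:=\boldsymbol{\tilde S}^{-1}\boldsymbol{\tilde Y}$, so that $\boldsymbol H_{\mathrm{eq}}=AA^*$, where $^*$ denotes entrywise conjugation. From \eqref{Eq.23}, $\boldsymbol{\tilde J}_{\mathrm{mNET}}D_1$ has block columns $[w_1;\,\gamma^*A^*w_1]$ and $[\gamma Aw_1^*;\,w_1^*]$. These match $D_1\bar J_{\mathrm{eqNET}}$ in \eqref{Eq.28} if and only if
\begin{equation*}
Aw_1^*=\mu_1 w_1 \quad\text{and}\quad A^*w_1=\mu_1^* w_1^*.
\end{equation*}
To prove this, I first conjugate $AA^*w_1=\lambda_1 w_1$ and use that $\lambda_1$ is real; this shows $w_1^*$ spans the $\lambda_1$-eigenspace of $A^*A=\boldsymbol H_{\mathrm{eq}}^*$. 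The simplicity of the eigenvalue, from Lemma~\ref{lem:eigenvalue_Heq}, is what makes that eigenspace one-dimensional. Next, $A^*A(A^*w_1)=A^*(\lambda_1w_1)=\lambda_1A^*w_1$, so $A^*w_1=c\,w_1^*$ for some scalar $c$. Conjugating gives $Aw_1^*=c^*w_1$. Pre-multiplying by $w_1^H$ and using $w_1^Hw_1=1$ identifies $c^*=w_1^HAw_1^*=\mu_1$. The same computation also recovers Proposition~\ref{prop:preserveNMPzero}, since $\lambda_1 w_1=AA^*w_1=|\mu_1|^2w_1$. This is the delicate step: if the eigenvalue were not simple, $A^*w_1$ need not be parallel to $w_1^*$.

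\textbf{Converter part.} This step needs $w_{1i}\neq0$ so that $\Phi_1$ is defined; I would state this as the implicit standing assumption. First compute $\Phi_1w_1^*=w_1$ entrywise, since $(w_{1i}/w_{1i}^*)\,w_{1i}^*=w_{1i}$, and likewise $\Phi_1^*w_1=w_1^*$. Applying \eqref{Eq.nominalCIG} to $D_1u=[u_1w_1;\,u_2w_1^*]$ then gives
\begin{equation*}
\boldsymbol{\bar J}_{\mathrm{mCIG}}D_1u=\bigl[(\bar J_{\mathrm{eqCIG},11}u_1+\bar J_{\mathrm{eqCIG},12}u_2)w_1;\;(\bar J_{\mathrm{eqCIG},21}u_1+\bar J_{\mathrm{eqCIG},22}u_2)w_1^*\bigr],
\end{equation*}
which is exactly $D_1\bar J_{\mathrm{eqCIG}}u$. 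This holds for every $u$, which establishes the second identity and completes the argument.
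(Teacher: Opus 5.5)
Your proposal is correct and follows the paper's own argument. The paper proves the same two intertwining identities, $\boldsymbol{\tilde J}_{\mathrm{mNET}}(s)D_1=D_1\bar J_{\mathrm{eqNET}}(s)$ and $\boldsymbol{\bar J}_{\mathrm{mCIG}}(s)D_1=D_1\bar J_{\mathrm{eqCIG}}(s)$, adds them, and applies the result to $u_c$ with $\|x_c\|_2=1$ from $D_1^HD_1=\mathbb I$. The grid identity uses $\boldsymbol N w_1^*=\mu_1 w_1$ and $\boldsymbol N^* w_1=\mu_1^* w_1^*$, which the paper imports from the proof of Proposition~\ref{prop:preserveNMPzero} and you re-derive from the simplicity of $\lambda_1$. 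Your explicit check that $\Phi_1 w_1^*=w_1$ and $\Phi_1^* w_1=w_1^*$ is left implicit in the paper. So is your remark that $w_{1i}\neq 0$ is needed for $\Phi_1$ to be defined; this is harmless, since any unit-modulus entry works where $w_{1i}=0$.
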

\noindent\textit{Proof:}
See Appendix~\ref{App:lemNominal}.

The original MCPS is then expressed as a deviation from the nominal
MCPS:$\boldsymbol{\tilde C}_{\mathrm{m}}(s) =
\boldsymbol{\bar C}_{\mathrm{m}}(s)
+\Delta(s)$, where 
$
\Delta(s):=
\boldsymbol{\tilde J}_{\mathrm{mCIG}}(s)
-\boldsymbol{\bar J}_{\mathrm{mCIG}}(s).
$ 
The following proposition quantifies the approximation accuracy of the critical single-converter subsystem in terms of the directional action of $\Delta(s)$. 

\begin{proposition}[Approximation of the original MCPS stability boundary]
\label{prop:originalBoundary}
Under the conditions of Lemma~\ref{lem:nominalBoundary},
if the directional deviation $\varepsilon_c :=\left\|\Delta(j\omega_c)x_c\right\|_2$ is sufficiently small, then $\underline{\sigma}(\boldsymbol{\tilde C}_{\mathrm{m}}(j\omega_c)) \leq \varepsilon_c$ is also small, implying that, at the stability boundary predicted by the critical single-converter subsystem, the original MCPS characteristic matrix is close to singularity and gives  $\underline{\sigma}(\boldsymbol{\tilde C}_{\mathrm{m}}(j\omega_c)) \approx 0, 1/\|\boldsymbol S_{\mathrm m}(s)\|_{\infty}\approx0$. 
\end{proposition}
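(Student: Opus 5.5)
The argument is a direct perturbation bound via the variational characterization of the smallest singular value, combined with Lemma~\ref{lem:nominalBoundary}. First I would check that $x_c=D_1u_c$ is a unit vector. Since $D_1^H D_1=\mathbb I$ and $\|u_c\|_2=1$, we have $\|x_c\|_2^2=u_c^H D_1^H D_1 u_c=1$. Next I would invoke Lemma~\ref{lem:nominalBoundary}. Under its hypotheses, the critical single-converter subsystem reaches its boundary at $s=j\omega_c$ with $\bar C_{\mathrm{eq}}(j\omega_c)u_c=\mathbb O$, and the lemma then gives $\boldsymbol{\bar C}_{\mathrm m}(j\omega_c)x_c=\mathbb O$.

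Then I would use the decomposition $\boldsymbol{\tilde C}_{\mathrm m}(s)=\boldsymbol{\bar C}_{\mathrm m}(s)+\Delta(s)$. This holds by definition, since both characteristic matrices contain the same grid term $\boldsymbol{\tilde J}_{\mathrm{mNET}}(s)$ and differ only in the converter part. Multiplying by $x_c$ gives
\begin{equation*}
\boldsymbol{\tilde C}_{\mathrm m}(j\omega_c)x_c=\boldsymbol{\bar C}_{\mathrm m}(j\omega_c)x_c+\Delta(j\omega_c)x_c=\Delta(j\omega_c)x_c .
\end{equation*}
For any square matrix $A$, $\underline{\sigma}(A)=\min_{\|x\|_2=1}\|Ax\|_2$. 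Taking $x=x_c$ therefore yields
\begin{equation*}
\underline{\sigma}(\boldsymbol{\tilde C}_{\mathrm m}(j\omega_c))\le\|\Delta(j\omega_c)x_c\|_2=\varepsilon_c .
\end{equation*}
This is the claimed inequality, so $\underline{\sigma}(\boldsymbol{\tilde C}_{\mathrm m}(j\omega_c))\approx 0$ whenever $\varepsilon_c$ is small.

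The last step converts this into a statement about $1/\|\boldsymbol S_{\mathrm m}\|_\infty$. Assume, as in \eqref{Eq.characteristicBoundary}, that $\boldsymbol{\tilde J}_{\mathrm{mCIG}}(j\omega_c)$ is nonsingular. Then
\begin{equation*}
\mathbb I+\boldsymbol{\tilde J}_{\mathrm{mNET}}\boldsymbol{\tilde J}_{\mathrm{mCIG}}^{-1}=\boldsymbol{\tilde C}_{\mathrm m}\boldsymbol{\tilde J}_{\mathrm{mCIG}}^{-1}.
\end{equation*}
Using $\underline{\sigma}(AB)\le\underline{\sigma}(A)\,\overline{\sigma}(B)$, this gives
\begin{equation*}
\frac{1}{\|\boldsymbol S_{\mathrm m}\|_\infty}\le\underline{\sigma}\big(\mathbb I+\boldsymbol{\tilde J}_{\mathrm{mNET}}(j\omega_c)\boldsymbol{\tilde J}_{\mathrm{mCIG}}^{-1}(j\omega_c)\big)\le\varepsilon_c\,\overline{\sigma}\big(\boldsymbol{\tilde J}_{\mathrm{mCIG}}^{-1}(j\omega_c)\big).
\end{equation*}
The first inequality is \eqref{Eq:MIMO_singular_margin} evaluated at the single frequency $\omega_c$. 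So $1/\|\boldsymbol S_{\mathrm m}\|_\infty\approx0$ as well.

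The only delicate point is this last conversion. The claim "$\approx 0$" for the sensitivity margin requires $\overline{\sigma}(\boldsymbol{\tilde J}_{\mathrm{mCIG}}^{-1}(j\omega_c))$ to be bounded, i.e.\ the converter model must not be nearly singular at $\omega_c$. I would state this explicitly as the standing nonsingularity assumption already used in \eqref{Eq.characteristicBoundary}. The core inequality $\underline{\sigma}\le\varepsilon_c$ itself follows immediately from Lemma~\ref{lem:nominalBoundary}.
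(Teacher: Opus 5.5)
Your proposal is correct and follows the paper's proof essentially verbatim. Lemma~\ref{lem:nominalBoundary} gives $\boldsymbol{\bar C}_{\mathrm m}(j\omega_c)x_c=\mathbb O$, hence $\boldsymbol{\tilde C}_{\mathrm m}(j\omega_c)x_c=\Delta(j\omega_c)x_c$, and the variational characterization of $\underline{\sigma}$ at the unit vector $x_c$ yields $\underline{\sigma}(\boldsymbol{\tilde C}_{\mathrm m}(j\omega_c))\le\varepsilon_c$. Your extra bound $1/\|\boldsymbol S_{\mathrm m}\|_\infty\le\varepsilon_c\,\overline{\sigma}(\boldsymbol{\tilde J}_{\mathrm{mCIG}}^{-1}(j\omega_c))$ makes quantitative the final step, which the paper only asserts by appeal to \eqref{Eq.characteristicBoundary}, and it correctly shows that the sensitivity-margin conclusion needs $\boldsymbol{\tilde J}_{\mathrm{mCIG}}^{-1}(j\omega_c)$ to stay bounded.
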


\noindent\textit{Proof:}
See Appendix~\ref{App:propBoundary}.

Propositions~\ref{prop:preserveNMPzero} and
\ref{prop:originalBoundary} establish the approximation relationship
between the critical single-converter subsystem and the original MCPS.
Since the critical subsystem and the nominal MCPS preserve the critical
NMP zero and its associated performance constraint, the stability-boundary
error is governed mainly by the directional deviation along the
corresponding eigenvector, which is typically small relative to the
overall model mismatch. Consequently, the critical subsystem can
approximate the stability boundary of the original MCPS. }

{\td Moreover, existing aggregated single-machine approaches  typically retain a selected device while compressing the remainder of the system into an equivalent impedance or impedance ratio for frequency-domain stability assessment \cite {liuOscillatoryStabilityCriterion2018,zhuImpedanceMarginRatio2024}. Although this also reduces the original system to an equivalent single-converter subsystem, such impedance compression may weaken the explicit connection between the reduced model and the physical topology and operating condition of the original system. In contrast, the proposed critical subsystem is constructed from the smallest eigenvalue of $\boldsymbol{H}_{\mathrm{eq}}$, which explicitly incorporates the network
topology and operating points of all converter buses.}

The NMP-Z-factor-based analytical framework developed for
SCPS in Section~\ref{Sec.3} can be extended to MCPS through the
critical single-converter subsystem, allowing the small-signal voltage
and synchronization stability margins of MCPS to be assessed using the
same analytical framework. The following stability criteria can then
be established:

\textit{1) Small-Signal Voltage Stability:} 
{\td 
Similar to the SCPS case, the voltage stability boundary of MCPS is given by 
${\rho_z}={\lambda_1}=1+\tau^2$, or equivalently 
$z_{\text{NMP}1}=0$.
This condition corresponds to 
$0=\det(\boldsymbol{J}_{\text{mNET}}(0))$ in \eqref{Eq.25}. Therefore, stable quasi-static or static operation of MCPS requires
\begin{equation}\label{Eq.30}
    z_{\text{NMP}1}>0,{\rho_z}={\lambda_1}>1+\tau^2 .
\end{equation}
Particularly, for $R\ll X,\tau\approx0$, the criterion reduces to $z_{\text{NMP}1}>0,{\rho_z}>1$.
}

\textit{2) Small-Signal Synchronization Stability:} It is governed by the NMP-Z factor $\rho_z$ in weak grids by incorporating key influencing factors such as multi-converter operating points $(P_i,Q_i,U_i,\delta_i)$ and grid parameters $\boldsymbol{B}$. This implies the constraint on control dynamic performance of CIGs via the critical NMP zero, yielding the following criterion
\begin{equation}\label{Eq.31}
    z_\text{NMP1} > z_{1c},{\rho _z} = {\lambda _1} > \rho_{zc}
\end{equation}
where $z_{1c}$ and $\rho_{zc}=\mu_{1c}\mu_{1c}^*$ represent the thresholds of the critical NMP zero and the NMP-Z factor on the small-signal synchronization stability boundary (i.e.,$1/\left \|{S_{{\text{eq}}}}(s,\mu_{1c})\right \| _\infty=0$), linked to control parameters of the given CIGs such as PLL bandwidth.

{\td
\noindent\textbf{Remark 2.}(Revisiting existing gSCR and gOSCR via NMP zeros)\label{Remark:gSCRandNMPZF}
\textit{ 
The gSCR has been widely employed as a grid strength index for assessing small-signal voltage and synchronization stability in MCPS \cite{peng2025measuring,dong2018small}. To account for variations in converter active-power outputs and bus-voltage magnitudes, gOSCR was further introduced as an operation-adapted variant of gSCR \cite{Liu2024gOSCR}. From the perspective of NMP zeros, both gSCR and gOSCR can be recovered as special cases of the proposed NMP-Z factor:
\begin{equation}\label{Eq.32}
\begin{aligned}
    \rho_z &= \underline{\lambda} ( \boldsymbol{P}_N^{-1}\boldsymbol{B}\cdot\boldsymbol{P}_N^{-1}\boldsymbol{B})  = \underline{\lambda} ^2(\boldsymbol{P}_N^{-1}\boldsymbol{B})  = gSCR^2,\\
    \rho_z &= \underline{\lambda} ( \boldsymbol{P}^{-1}\boldsymbol{U}\boldsymbol{B}\boldsymbol{U}\cdot \boldsymbol{P}^{-1}\boldsymbol{U}\boldsymbol{B}\boldsymbol{U})\\
    &=  \underline{\lambda} ^2(\boldsymbol{P}^{-1}\boldsymbol{U}^2\boldsymbol{B})  = gOSCR^2.
\end{aligned}   
\end{equation}
which holds when $\tilde{\boldsymbol{S}} = \boldsymbol{P}_N$ (rated active power, $Q_i = 0$) for gSCR and $\tilde{\boldsymbol{S}} = \boldsymbol{P}$ ($Q_i = 0$) for gOSCR; $\tilde{\boldsymbol{Y}} = \boldsymbol{B}$ (nominal voltages $U_i = 1.0$ p.u. and $\delta_{ij} \approx 0$) for gSCR and $\tilde{\boldsymbol{Y}} = \boldsymbol{U}\boldsymbol{B}\boldsymbol{U}$ ($\delta_{ij} \approx 0$) for gOSCR.}

\textit{ Hence, gSCR corresponds to the rated-condition specialization of gOSCR, whereas gOSCR corresponds to the unity-power-factor and small-angle specialization of the proposed NMP-Z factor. the proposed NMP-Z factor further retains the effects of reactive-power outputs and voltage-angle differences through the complex matrices$\tilde{\boldsymbol{S}}$ and $\tilde{\boldsymbol{Y}}$. 
More importantly, it directly determines the location of the critical NMP zero and thereby provides a control-theoretic foundation for gSCR and gOSCR in assessing both voltage and synchronization stability.
}
}
\subsection{Proposed Unified Stability Margin Assessment Method}
A unified stability margin assessment method can be developed based on the proposed NMP‑Z factor and illustrated in Fig.~9. 

The proposed method leverages the NMP‑Z factor as a static metric, offering several practical advantages. First, it maintains the computational simplicity of traditional SCR/gSCR‑based approaches while providing a more comprehensive representation of multiple converters under various non‑rated operating points. 
Second, all inputs (such as operating point of each converter $(P_i,Q_i,U_i,\delta_i),\; i=1,\dots,n$ and grid susceptance matrix $\boldsymbol{B}$ )
required for computing the NMP-Z factor can be readily available from static phasor measurement unit (PMU) data.
Third, the threshold of NMP-Z factor for MCPS can be determined from a single-converter subsystem. This subsystem is constructed via distributed frequency scanning of individual converters' $J_{\text{CIG}i}(j\omega)$, applicable even to black‑box models. This eliminates the need for full‑system frequency scanning or electromagnetic-transient (EMT) simulation, thereby significantly reducing analysis complexity.
\begin{figure}[H]
\centering
\includegraphics[width=3.0in]{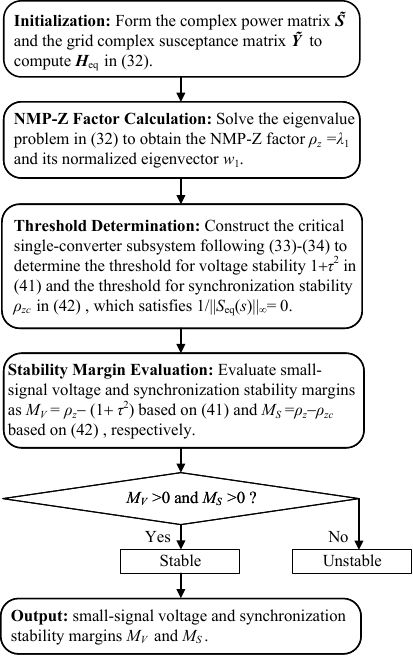}
{\td \caption{Implementation of the Unified Stability Margin Assessment Method. }
}\label{Fig.9-ProposedMethodFlow}
\end{figure}
\vspace{-1em}
\noindent\textbf{Example 2.}(A three-converter power system) 
\textit{
We consider now the three-converter system in Fig.~\ref{Fig.10-Three_Converter} to illustrate the proposed method (parameters of CIGs and grid are given in Appendix B). 
The operating points are listed as Case 1 in Table~II (see the following section). 
{\td Conventional stability assessment for such a multi-converter system typically relies on a high-order state-space or frequency-domain impedance model incorporating all system components, which makes the analysis tedious, especially under varying operating conditions. The proposed approach offers a more convenient alternative and can serve as a rapid pre-screening tool for identifying potential weak-grid voltage and synchronization stability risks before a detailed full-system eigenvalue or broadband impedance analysis is conducted.}
}

\textit{ {\td
Firstly, using the given operating points and grid data, the complex power matrix and grid complex susceptance matrix are obtained as $\boldsymbol{\tilde S} \!= \! \text{diag}\{0.93e^{j0.19},1.09e^{j0.45},1.06e^{j0.44}\}$,
\[
\boldsymbol{\tilde Y} \!= \! \left[ \begin{array}{@{}c@{\,}c@{\,}c@{}}
8.32 & 1.77e^{-j3.12} & 1.02e^{j2.91} \\
1.77e^{j3.12} & 9.97 & 0.37e^{j2.89} \\
1.02e^{-j2.91} & 0.37e^{-j2.89} & 2.46
\end{array} \right].
\] }
}

\textit{ {\td
Next, from $\boldsymbol{\tilde S}$ and $\boldsymbol{\tilde Y}$, we compute $\boldsymbol{H}_{\mathrm{eq}}$. Its smallest eigenvalue and corresponding normalized eigenvector yield the NMP‑Z factor and eigenvector:
$\rho_z = 4.41,w_1=[0.18{e^{j0.17}},0.08{e^{j0.14}},0.98e^{-j0.01}]^T$. The equivalent grid complex Jacobian transfer matrix in \eqref{Eq.28} $\bar J_{\mathrm{eqNET}}(j\omega)$ also can be given. }
}

\textit{ {\td
Then, the frequency responses $J_{\mathrm{CIG}i}(j\omega)$ ($i=1,2,3$) of the individual converters are obtained though individual frequency‑scanning tests. Combining these  frequency responses with $\boldsymbol{w}_1$ gives the equivalent single‑converter complex Jacobian transfer matrix $\bar J_{\mathrm{eqCIG}} (j\omega)$ in \eqref{Eq.29}. Together with $\bar J_{\mathrm{eqNET}}(j\omega)$, the critical single-converter subsystem can be formulated. The threshold of the NMP-Z factor is determined by varying $\mu_1$ until the critical subsystem reaches its stability boundary, i.e., ${1/\left\|S_{\mathrm{eq}}(s,\mu_{1c})\right\|_{\infty}} \approx 0\}$.
This yields $\rho_{zc} = \mu_{1c}\mu_{1c}^{*} = 4.23, {\omega_c}/{2\pi}=25.4~\mathrm{Hz}.$ Moreover, the directional deviation defined in Proposition~\ref{prop:originalBoundary} is calculated as $\varepsilon_c = 0.054 \approx 0$, indicating that the approximation error introduced by the critical subsystem is small. }
}
\begin{figure}[H]
\centering
\includegraphics[width=3.5in]{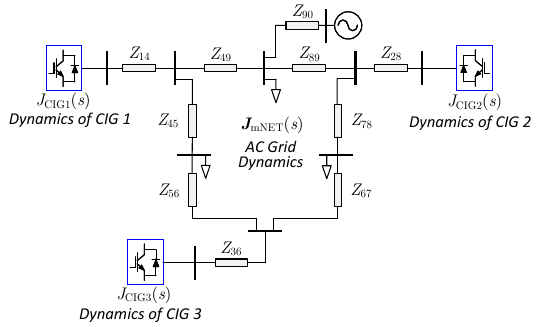}
\caption{A three-converter test system. }\label{Fig.10-Three_Converter}
\end{figure}
\vspace{-0.5cm}
\textit{ {\td
Fig.~11 compares the inverse maximum singular values of the sensitivity functions of the original three-converter system and the critical single-converter subsystem near the stability boundary (i.e., $1/ \overline{\sigma}\{\boldsymbol S_{\mathrm{m}}(j \omega)\}$ and $1/ \overline{\sigma}\{S_\mathrm{eq}(j \omega)\}$ ). 
Since the inverse sensitivity peak (i.e., $1/\|S(s)\|_\infty = \min_\omega 1/ \overline{\sigma}\{{S}(j \omega)\}$) quantifies the stability margin, the close agreement between the two curves verifies Proposition~\ref{prop:originalBoundary}, confirming that the critical subsystem can capture the stability margin of the original MCPS. }
}

\textit{ {\td
Finally, because the grid resistance-to-reactance ratio is $\tau=0.1$, the voltage- and synchronization-stability margins are evaluated as
\[
\begin{aligned}
M_V
&=
\rho_z-(1+\tau^2)
=
4.41-1.01
=
3.40>0,\\
M_S
&=
\rho_z-\rho_{zc}
=
4.41-4.23
=
0.18>0.
\end{aligned}
\]
Both margins are positive, analytically confirming that the three-converter system operates stably under Case~1. }
}
\begin{figure}[H]
\centering
\includegraphics[width=2.8in]{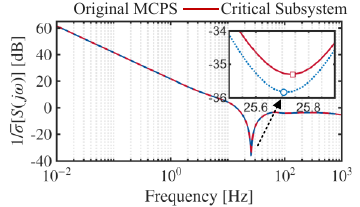}
{\td \caption{ Comparison of inverse maximum singular values of the sensitivity functions for the original three‑converter system and its critical single-converter subsystem near the stability boundary. } }\label{Fig.11-Compare_Subsystem}
\end{figure}

\vspace{-1em}
\section{Simulation Results}
{\td
\subsection{Accuracy Verification}
This subsection verifies the accuracy of the proposed stability assessment method and compares its results with those of existing grid-strength indices. 

The three-converter system in Example 2 is used for the test. Three representative operating conditions are considered, as summarized in Table~II. Case~1 represents stable operation, Case~2 represents the small-signal synchronization instability and Case~3 represents the non-periodic voltage instability.
The grid parameters are identical to those in Example 2 (the grid resistance-reactance ratio is set to $\tau=0.1$), and converter parameters for each Case are provided in Appendix ~\ref{App:paraSystem}.

Table~III summarizes the NMP-Z factor,
critical values, and the corresponding voltage and synchronization
stability margins for the three test cases. For Case~1, both
$M_V$ and $M_S$ remain positive, indicating sufficient voltage and
synchronization stability margins. For Case~2, the synchronization
stability margin becomes negative ($M_S<0$), indicating the
synchronization instability. For Case~3, the voltage stability
margin approaches zero ($M_V\approx0$), indicating that the system
operates close to the voltage stability boundary.
\vspace{-2.5em}
\begin{table}[H]
\centering
\setlength{\tabcolsep}{10pt} 
\renewcommand{\arraystretch}{1.3} 
{\td \caption{Operating Points of Converters in Test Cases (Per Unit)} }
\label{Table.2_operatingpoints_3Converter}
\begin{tabular}{l|ccc}
\hline
& Case 1 & Case 2 & Case 3 \\
\hline
CIG 1 & 
\makecell[c]{$P_1=0.91$\\ $Q_1=0.18$\\ $U_1=0.97$\\ $\delta_1=0.12$} &
\makecell[c]{$P_1=1.01$\\ $Q_1=-0.20$\\ $U_1=0.90$\\ $\delta_1=0.17$} & 
\makecell[c]{$P_1=1.45$\\ $Q_1=-0.42$\\ $U_1=0.83$\\ $\delta_1=0.35$} \\
\hline
CIG 2 & 
\makecell[c]{$P_2=0.99$\\ $Q_2=0.47$\\ $U_2=1.02$\\ $\delta_2=0.11$} & 
\makecell[c]{$P_2=0.99$\\ $Q_2=0.47$\\ $U_2=1.00$\\ $\delta_2=0.12$} & 
\makecell[c]{$P_2=1.39$\\ $Q_2=0.85$\\ $U_2=1.01$\\ $\delta_2=0.21$} \\
\hline
CIG 3 & 
\makecell[c]{$P_3=0.96$\\ $Q_3=0.46$\\ $U_3=0.99$\\ $\delta_3=0.35$} & 
\makecell[c]{$P_3=0.96$\\ $Q_3=0.31$\\ $U_3=0.87$\\ $\delta_3=0.49$} & 
\makecell[c]{$P_3=1.49$\\ $Q_3=0.80$\\ $U_3=0.86$\\ $\delta_3=1.09$} \\
\hline
\end{tabular}
\end{table}
\vspace{-5em}
\begin{table}[H]
\centering
\renewcommand{\arraystretch}{1.2}
\setlength{\tabcolsep}{3pt}
{\td
\caption{Stability Margin Assessment in Test Cases}
}
\label{Table.3_StabilityMargin}
\begin{tabular}{l@{\hspace{5pt}}c@{\hspace{5pt}}c@{\hspace{5pt}}c}
\hline
& Case 1
& Case 2
& Case 3 \\
\hline

NMP-Z Factor
& $\rho_z=4.41$
& $\rho_z=3.02$
& $\rho_z=1.06$ \\

Thresholds
& \makecell[c]{$1+\tau^2=1.01,$\\$\rho_{zc}=4.23$}
& \makecell[c]{$1+\tau^2=1.01,$\\$\rho_{zc}=3.17$}
& \makecell[c]{$1+\tau^2=1.01$} \\

Stability Margins
& \makecell[c]{$M_V=3.40,$\\$M_S=0.18$}
& \makecell[c]{$M_V=2.01,$\\$M_S=-0.15<0$}
& \makecell[c]{$M_V=0.05\approx0$} \\

Criterion
& Stable
& Unstable
& Critically Unstable \\
\hline
\end{tabular}
\end{table}
\vspace{-3em}
Fig.~12 compares the NMP zeros and dominant modes of the original MCPS and the proposed critical subsystem under three operating conditions. As shown in Fig.~12(a), the NMP zeros of the critical subsystem coincide with those of the original system for all cases, indicating that the proposed subsystem successfully preserves the non-minimum-phase characteristics of the original multi-converter system. 
Moreover, Fig.~12(b) compares the dominant poles of the two systems. The dominant modes of the critical subsystem remain close to those of the original system, including both the real unstable
mode and oscillatory modes. Therefore, the proposed critical subsystem can accurately capture the dominant dynamic behavior of the original MCPS.
\begin{figure*}[t]
\centering
  \includegraphics[width=\linewidth]{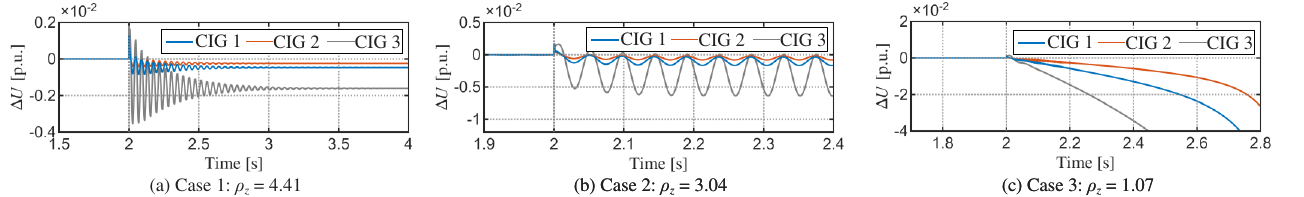}
{\td   \caption{Time-domain voltage responses of CIG 1--3 under three different
  operating conditions: (a) Case 1; (b) Case 2; (c) Case 3.}
  }\label{Fig:VoltageWaves_in_MCPS}
\end{figure*}
\begin{table*}[!t]
\centering
{\td \caption{Assessment Results of Different Grid Strength Indices} }
\label{Table:index_comparison}
\scriptsize
\renewcommand{\arraystretch}{1.05}
\setlength{\tabcolsep}{3.5pt}
\begin{tabular}{@{}lcccc@{}}
\toprule
Indices
& NMP-Z Factor
& ESCR in \cite{IEEEStandardInterconnection}
& gOSCR in \cite{Liu2024gOSCR}
& IMR in \cite{zhuImpedanceMarginRatio2024} \\
\midrule

Expression
&
$\rho_z=\underline{\lambda}( {\boldsymbol{\tilde {S}}^{ - 1}}\boldsymbol{\tilde{Y}}{\boldsymbol{\tilde {S}}^{* - 1}}{\boldsymbol{\tilde {Y}}^*})$ 
&
$\displaystyle
\mathrm{ESCR}^{(i)}
=
\frac{S_{ac,i}}
{P_i+
 \sum_{\substack{j=1\\j\neq i}}^{n}
 P_j Z_{ji}/Z_{ii}}
$
&
$\displaystyle
\underline{\lambda}
(
\operatorname{diag}
({U_i^2/P_i})
\boldsymbol{B}
)
$
&
$\displaystyle
\mathrm{IMR}^{(k)}
=
\frac{|\sigma|}
{\left\|
\operatorname{Res}_{\lambda}^{*}
\boldsymbol{Y}_{kk}^{\mathrm{sys}}(s)
\right\|
\left\|
\boldsymbol{Z}_{Ak}(\lambda)
\right\|}
$
\\
\midrule

Values
&
$\rho_z=4.41$
&
\makecell[c]{
$\mathrm{ESCR}^{(1)}=4.9598,\quad
 \mathrm{ESCR}^{(2)}=6.4025$\\[-1pt]
$\mathrm{ESCR}^{(3)}=2.0052$
}
&
$\mathrm{gOSCR}=2.32$
&
\makecell[c]{
$\mathrm{IMR}^{(1)}=0.225,\quad
 \mathrm{IMR}^{(2)}=0.815$\\[-1pt]
$\mathrm{IMR}^{(3)}=0.007$
}
\\

Thresholds
&
$4.23$
&
$3.0$
&
$2.21$
&
$0$
\\
\midrule

Criterion
&
Stable
&
Unstable
&
Stable
&
Stable
\\
\bottomrule
\end{tabular}
\end{table*}

\begin{figure}[H]
\centering
  \includegraphics[width=2.8in]{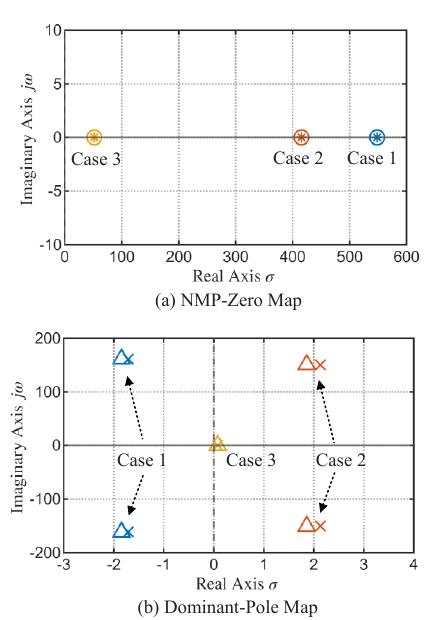}
{\td \caption{Comparison of NMP zeros and dominant modes between the original MCPS and the critical subsystem under different operating conditions: 
(a) NMP-zero map (stars: original MCPS; circles:
critical subsystem);
(b) dominant-mode map (crosses: original MCPS;
triangles: critical subsystem).}}\label{Fig:eigenvaluezero}
\end{figure}

Fig.~13 shows the corresponding time-domain voltage responses of CIGs
following a 0.01 p.u. active-power step applied to all CIGs at 2.0s, which drives the system to operating points of Cases 1–3.
For Case~1, the system exhibits well-damped responses with $\rho_z=4.41$.
As $\rho_z$ decreases to 3.02 in Case~2, 
oscillations appear at 25.3 Hz near the PLL bandwidth, indicating a reduction of synchronization stability margin. For Case~3, where $\rho_z=1.07$, the voltage response diverges, demonstrating voltage instability. These results verify that the proposed NMP-Z-factor-based stability margin can effectively characterize different instability issues.

For further comparison, Table~IV compares the assessment results obtained using ESCR \cite{IEEEStandardInterconnection}, gOSCR (i.e., an operation-adapted variant of gSCR) \cite{Liu2024gOSCR}, IMR \cite{zhuImpedanceMarginRatio2024}, and the proposed NMP-Z factor at the operating point of Case~1. The benchmark eigenvalue analysis and time-domain simulations confirm that the system is stable under this
case. As shown in Table~IV, the ESCR based on empirical methods cannot accurately assess stability. In contrast, gOSCR, IMR, and the proposed NMP-Z factor correctly identify the stable operation.

For gOSCR, Remark~2 has shown that it is a special case of the proposed NMP-Z factor under the ideal conditions of unity power factor of CIGs. Therefore, the NMP-Z factor extends gOSCR to general non-rated operating conditions.

IMR is a mode-oriented impedance margin evaluated at a selected dominant eigenvalue. For the operating condition in Table~IV, the dominant eigenvalue pair $\lambda=-1.71\pm j161.154$ is used to calculate the local IMRs. Its implementation requires the frequency-dependent system admittance $\boldsymbol{Y}_{kk}^{\mathrm{sys}}(s)$, the CIG impedance $\boldsymbol{Z}_{Ak}(s)$, and prior identification of the dominant eigenvalue. Obtaining this information for a large-scale MCPS with multiple heterogeneous CIGs generally increases the modeling and computational burden. By comparison, the proposed NMP-Z factor provides the same correct stability judgment while retaining the computational simplicity of a static grid-strength index.

Moreover, VDSCR also relies on frequency-dependent impedance information. Its formulation is developed for parallel converters with identical control parameters and algorithms in a predominantly inductive grid  \cite{xiaoDesignOrientedSmallSignalStability2025}. These conditions may not be generally satisfied in the practical MCPS.

\subsection{Applicability under Different Grid Conditions}

This subsection examines the applicability of the proposed method under different grid strengths and resistance--reactance ratios. The operating point of Case~1 is adopted as the base condition, while the converter parameters remain unchanged.

\textit{1) Different Grid Strengths:} 
To change the grid strength to generate several cases, a grid-impedance scaling factor $k_{\rm grid}$ is  used to equally increase (or reduce) all line impedances in the power network at the same time, and thus a smaller  $k_{\rm grid}$ represents a stronger grid. 

The calculated NMP-Z factors and the corresponding stability margins are summarized in Table~V.
As the grid becomes stronger, i.e., $k_{\rm grid}$ decreases from 1.0 to 0.5, the NMP-Z factor increases from 4.41 to 4.84. 
The voltage stability margin $M_V$ increases from 3.40 to 3.83. 
The synchronization stability margin $M_S$ increases from 0.18 to 0.32, although the threshold $\rho_{zc}$ increases slightly from 4.23 to 4.52 due to the variation of the critical subsystem with grid conditions.
\begin{figure}[H]
\centering
\includegraphics[width=2.8in]{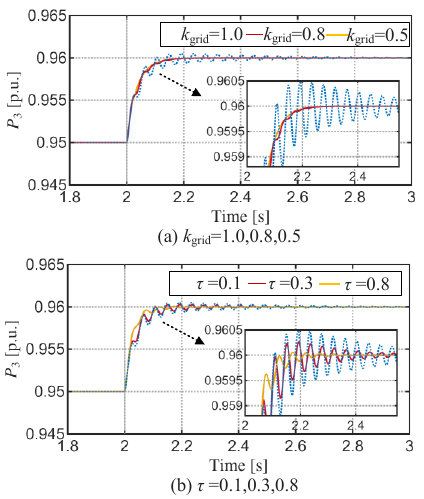}
{\td \caption{Time-domain active power responses of CIG 3 in the test MCPS under different grid conditions. (a) different grid strength: $k_{\rm grid}=1.0,0.8,0.5$ ; (b) different resistance--reactance ratios: $\tau = 0.1,0.3,0.8$.} }
\label{Fig:powewavegrid}
\end{figure}

\begin{table}[H]
\centering
\renewcommand{\arraystretch}{1.2}
\setlength{\tabcolsep}{3pt}
{\td \caption{Stability Assessment under Different Grid Strengths} }
\label{Table:grid_strength}
\begin{tabular}{l@{\hspace{5pt}}c@{\hspace{5pt}}c@{\hspace{5pt}}c}
\hline
& $k_{\rm grid}=1.0$
& $k_{\rm grid}=0.8$
& $k_{\rm grid}=0.5$ \\
\hline

NMP-Z Factor
& $\rho_z=4.41$
& $\rho_z=4.67$
& $\rho_z=4.84$ \\

Thresholds 
& \makecell[c]{$1+\tau^2=1.01,$\\$\rho_{zc}=4.23$}
& \makecell[c]{$1+\tau^2=1.01,$\\$\rho_{zc}=4.39$}
& \makecell[c]{$1+\tau^2=1.01,$\\$\rho_{zc}=4.52$} \\

Stability Margins
& \makecell[c]{$M_V=3.40,$\\$M_S=0.18$}
& \makecell[c]{$M_V=3.66,$\\$M_S=0.28$}
& \makecell[c]{$M_V=3.83,$\\$M_S=0.32$} \\

Criterion
& Stable
& Stable
& Stable \\
\hline
\end{tabular}
\end{table}
\begin{table}[H]
\centering
\renewcommand{\arraystretch}{1.2}
\setlength{\tabcolsep}{3pt}
{\td \caption{Stability Assessment under Different Resistance--Reactance Ratios} }
\label{Table:impedance_ratio}
\begin{tabular}{l@{\hspace{5pt}}c@{\hspace{5pt}}c@{\hspace{5pt}}c}
\hline
& $\tau=0.1$
& $\tau=0.3$
& $\tau=0.8$ \\
\hline

NMP-Z Factor
& $\rho_z=4.41$
& $\rho_z=5.63$
& $\rho_z=8.39$ \\

Thresholds 
& \makecell[c]{$1+\tau^2=1.01,$\\$\rho_{zc}=4.23$}
& \makecell[c]{$1+\tau^2=1.09,$\\$\rho_{zc}=5.08$}
& \makecell[c]{$1+\tau^2=1.64,$\\$\rho_{zc}=6.94$} \\

Stability Margins
& \makecell[c]{$M_V=3.40,$\\$M_S=0.18$}
& \makecell[c]{$M_V=4.54,$\\$M_S=0.55$}
& \makecell[c]{$M_V=6.75,$\\$M_S=1.45$} \\

Criterion
& Stable
& Stable
& Stable \\
\hline
\end{tabular}
\end{table}
}
{\td Fig.~14(a) compares the active-power responses of the CIG~3 for different grid strengths. As shown in Fig.~14(a), reducing
$k_{\rm grid}$ leads to faster damping and smaller oscillation amplitudes,
which agrees with the increasing stability margins predicted by the
proposed method. This agreement confirms the applicability of the proposed NMP-Z factor to stability assessment under varying grid strengths.

\textit{2) Different Resistance--Reactance Ratios:} The influence of the network resistance is further investigated by setting the resistance--reactance ratio to $\tau=0.1$, $0.3$, and $0.8$.
As $\tau$ increases, the grid changes from a predominantly inductive network to a more resistive one.
For each value of $\tau$, the power-flow operating point and the corresponding network matrix (i.e.,$\tilde{\boldsymbol S}$ and $\tilde{\boldsymbol Y}$ in \eqref{Eq.27}) are recalculated, resulting in different NMP-Z factors. 

The calculated NMP-Z factors and the corresponding stability margins are summarized in Table~VI. 
As $\tau$ increases from $0.1$ to $0.8$, the NMP-Z factor $\rho_z$ increases from $4.41$ to $8.39$. Meanwhile, the voltage-stability threshold $1+\tau^2$ increases from $1.01$ to $1.64$, and the synchronization-stability threshold $\rho_{zc}$ increases from $4.23$ to $6.94$.
Accordingly, the voltage stability margin
$M_V$ increases from $3.40$ to $6.75$, while the synchronization stability margin
$M_S$ increases from $0.18$ to $1.45$.
Both margins remain positive for all three cases, predicting stable operation.

Fig.~14(b) compares the active-power responses of CIG~3 under different resistance--reactance ratios. 
As $\tau$ increases, the oscillation amplitude decreases and the response becomes more rapidly
damped. This trend is consistent with the increases in both $M_V$ and $M_S$ reported in Table~VI. 
The agreement confirms that the proposed method can  reflect the influence of the resistance--reactance ratio on system dynamic performance.

\subsection{Applicability under Different Converter Control Schemes}
This subsection verifies the applicability of the proposed method under different converter control schemes and parameters. In contrast to Cases~1--3, where all CIGs adopt the APC--RPC control scheme, CIG~2 is configured with a DVC--RPC control scheme (PI parameters of DVC:{1,8} and PI parameters of RPC:{0.5,40}), while CIGs~1 and~3 retain APC--RPC control with different parameters. The PLL bandwidth of CIG~1 is fixed at $5$~Hz. Based on the operating condition of Case~1, three
additional cases are generated by setting the PLL bandwidth of CIG~3 to $20$, $15$, and $5$~Hz, respectively.

As summarized in Table~VII, reducing the PLL bandwidth of CIG~3 from $20$ to $5$~Hz decreases the synchronization-stability threshold $\rho_{zc}$ from $4.23$ to $4.06$, and thus the synchronization stability margin
$M_S$ increases from $0.18$ to $0.35$. Since the grid parameters and operating point remain unchanged,
$\rho_z=4.41$, the voltage-stability threshold
$1+\tau^2=1.01$, and the voltage stability margin
$M_V=3.40$ are identical in all three cases. All three cases
remain stable, while the increasing $M_S$ indicates progressively improved damping performance.

Fig.~15 compares the voltage responses of CIG~3 under different PLL bandwidths. As the PLL bandwidth decreases from $20$~Hz to $15$ and $5$~Hz, the
oscillations become progressively smaller and more rapidly damped. This trend agrees with the increase in the synchronization stability margin $M_S$ reported in Table~VII.

The results also demonstrate the different roles of the NMP-Z factor and its critical value. The unchanged $\rho_z$ reflects the fixed grid and operating condition, whereas the variation of $\rho_{zc}$ captures the influence of the converter control dynamics. The agreement between the calculated margins and the time-domain responses confirms that the proposed method is applicable to MCPSs with heterogeneous control schemes and parameters.}

\begin{table}[H]
\centering
\renewcommand{\arraystretch}{1.2}
\setlength{\tabcolsep}{3pt}
{\td \caption{Stability Assessment under Different PLL Bandwidths} }
\label{Table:control_schemes}
\begin{tabular}{l@{\hspace{5pt}}c@{\hspace{5pt}}c@{\hspace{5pt}}c}
\hline
& $\omega_{\rm PLL}=20$ Hz
& $\omega_{\rm PLL}=15$ Hz
& $\omega_{\rm PLL}=5$ Hz \\
\hline

NMP-Z Factor
& $\rho_z=4.41$
& $\rho_z=4.41$
& $\rho_z=4.41$ \\

Thresholds
& \makecell[c]{$1+\tau^2=1.01,$\\$\rho_{zc}=4.23$}
& \makecell[c]{$1+\tau^2=1.01,$\\$\rho_{zc}=4.15$}
& \makecell[c]{$1+\tau^2=1.01,$\\$\rho_{zc}=4.06$} \\

Stability Margins
& \makecell[c]{$M_V=3.40,$\\$M_S=0.18$}
& \makecell[c]{$M_V=3.40,$\\$M_S=0.26$}
& \makecell[c]{$M_V=3.40,$\\$M_S=0.35$} \\

Criterion
& Stable
& Stable
& Stable \\
\hline
\end{tabular}
\end{table}

\begin{figure}[H]
\centering
\includegraphics[width=3.0in]{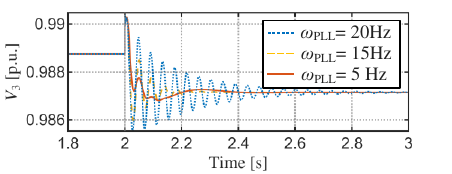}
{\td \caption{Time-domain voltage responses of CIG 3 in the test MCPS under different PLL bandwidths.} }
\label{Fig:voltagewavePLL}
\end{figure}

\section{Conclusion}
This paper reinterpreted the small-signal voltage and synchronization stability of converter‑integrated weak grids through the insights of NMP zeros.
By Jacobian transfer matrix modeling, we derived an analytical expression for NMP zeros and introduced a novel stability metric, termed NMP‑Z factor.
{\td The proposed metric provides a common assessment framework for voltage stability (tied to the NMP zero at the origin) and synchronization stability (tied to low-frequency or sideband NMP zeros that constrained the PLL dynamic performance).
The widely-used SCR is proven to be special cases of the NMP-Z factor under rated conditions, thereby providing a control-theoretic foundation for their practical wisdom.
Inspired by this insight, we developed an NMP-Z-factor-based stability assessment method that retained the simplicity of SCR‑based approaches. It required only the static metric and frequency-response models of individual converters, making it applicable to multi‑converter systems under various operating points.
Our results provided a new perspective for quantifying weak grid issues by rigorously rethinking the fundamental limitations for converters.}
Future work will extend the framework to the optimal operation of converter-based power systems with multiple stability constraints.

\appendix
\renewcommand{\thesection}{\Alph{section}}
{\td
\subsection{Proof of Lemma~\ref{lem:mimo_nmp_bound}}\label{App:lemMIMO}
From $w_z^H L(z)=0$ and $w_z^H S(z)=w_z^H$, we have the scalar directional sensitivity $s_z(s)=w_z^H S(s)w_z$ satisfies $s_z(z)=1$. Since $\|w_z\|_2=1$, the low-frequency condition in \eqref{Eq:low_freq_spec_mimo} gives $|s_z(j\omega)|\leq\overline{\sigma}(S(j\omega))\leq M_L$ for $|\omega|\leq\omega_L$. Applying the SISO bound in Lemma~\ref{lem:siso_nmp_bound} to $s_z(s)$ yields  \eqref{Eq:mimo_nmp_lower_bound}.
}

\subsection{Derivation of \eqref{Eq.6} and \eqref{Eq.20}}\label{App:derivGrid}
Since the ac grid in SCPS is the special case of ac grid in MCPS when $n=1$, here we only give the derivation of $\boldsymbol{J}_\text{mNET}(s)$ and  ${J}_\text{NET}(s)$ is its special case in SCPS.

{\td Consider the dynamics of line inductance for any $(i,j)\in\cal E$ can be represented as ($\tau$ is the identical $R_{ij}/X_{ij}$ ratio of all lines)
\begin{equation}\label{Eq.33}
\left[ \begin{array}{l}
{U_{di}-U_{dj}}\\
{U_{qi}-U_{qj}}
\end{array} \right] \!\!
= \!\!
\left[ {\begin{array}{*{20}{c}}
{\tau X_{ij} + s{L_{ij}}}&{ -X_{ij}}\\
{X_{ij}}&{\tau X_{ij} +s{L_{ij}}}
\end{array}} \right]\left[ \begin{array}{l}
{I_{d,ij}}\\
{I_{q,ij}}
\end{array} \right]
\end{equation}
where $I$ represents the current on branch $ij$, subscribe "\textit{d}" and "\textit{q}" represent the \textit{d}-axis an \textit{q}-axis components in synchronous rotating reference \textit{dq}-frame respectively.

The relationship between the power and current is expressed 
\begin{equation}\label{Eq.34}
\left\{ \begin{array}{l}
{P_{ij}} = {U_{{{d}}i}}{I_{{{d,}}ij}} + {U_{{{q}}i}}{I_{{{q,}}ij}}\\
{Q_{ij}} =  - {U_{{{d}}i}}{I_{{{q,}}ij}} + {U_{{{q}}i}}{I_{{{d,}}ij}}
\end{array} \right.
\end{equation}
\begin{equation}\label{Eq.35}
\left[ \begin{array}{l}
{P_i}\\
{Q_i}
\end{array} \right] - \left[ \begin{array}{l}
\sum\nolimits_{j = 1,j \ne i} {{P_{ij}}} \\
\sum\nolimits_{j = 1,j \ne i} {{Q_{ij}}} 
\end{array} \right] = \mathbb{O}
\end{equation}
where $P_{ij}$ and $Q_{ij}$ represent the active power and reactive power on the branch $ij$ ,respectively; $P_{ij}$ and $Q_{ij}$  satisfy the power flow equations as (\ref{Eq.35});  $P_{i}$ and $Q_{i}$  represent the active power and reactive power injection of the $i$-th converter,$i=\{1,...,n\}$.

Then, the transformation between the voltage magnitude-synchronization polar reference frame and the voltage \textit{dq}-frame is expressed as
\begin{equation}\label{Eq.36}
    \left\{ \begin{array}{l}
{U_{{{d}}i}} = {U_i}\cos {\delta _i}\\
{U_{{{q}}i}} = {U_i}\sin {\delta _i}
\end{array} \right.
\end{equation}

Linearizing (\ref{Eq.33})-(\ref{Eq.36}) and the power-voltage dynamics at the $i$-th converter bus is expressed as 
\begin{equation}\label{Eq.37}
\hspace*{-0.4em}  
\setlength{\arraycolsep}{1.2pt}
\medmuskip=1mu
\thickmuskip=1mu 
\begin{aligned}
\left[ \begin{array}{c}
\Delta P_i\\[2pt]
\Delta Q_i
\end{array} \right] &= 
\Bigg(\!\left[ \begin{array}{cc}
P_i & -Q_i\\[2pt]
Q_i & P_i
\end{array} \right] + U_i^2 B_{ii}
\left[ \begin{array}{cc}
\beta(s) & \alpha(s)\\[2pt]
\alpha(s) & -\beta(s)
\end{array} \right]\!\Bigg)
\left[ \begin{array}{c}
U_i^{-1}\Delta U_i\\[2pt]
\Delta \delta_i
\end{array} \right] \\
&\quad+ \sum_{\substack{j=1\\j\neq i}} 
\Bigg(B_{ij}U_iU_j\cos\delta_{ij}
\left[ \begin{array}{cc}
\beta(s) & \alpha(s)\\[2pt]
\alpha(s) & -\beta(s)
\end{array} \right] \\
&\quad\phantom{+}+ B_{ij}U_iU_j\sin\delta_{ij}
\left[ \begin{array}{cc}
-\alpha(s) & \beta(s)\\[2pt]
\beta(s) & \alpha(s)
\end{array} \right]\!\Bigg)
\left[ \begin{array}{c}
U_j^{-1}\Delta U_j\\[2pt]
\Delta \delta_j
\end{array} \right]
\end{aligned}
\end{equation}

Aggregating the dynamics of $n$ converters yields the Jacobian transfer matrix of the ac grid in MCPS as \eqref{Eq.20}.
}

\subsection{System Parameters}\label{App:paraSystem}
The main parameters of the single-converter system in Example 1 are: $L_\text{f} = 0.05$p.u., $C_\text{f} = 0.05$p.u., $R_\text{f} = 0.02$p.u. for filter; $L_\text{g} = 0.5$p.u., $R_\text{g} = 0.01$p.u. for grid transmission Line; DC Capacitor: 0.038 p.u.; PI parameters of current controller:\{0.3,10\}, time constant for voltage forward: 0.002, PI parameters of APC:\{1,20\}, PI parameters of RPC:\{1,20\}, PLL bandwidth: 15 Hz (unless otherwise specified).

{\td The main parameters of the three-converter system in Example 2 are as follows. The base value for power is 100 MVA and for frequency it is 50 Hz. Impedance  ($Z_{ij}=R_{ij}+jX_{ij}$) of transmission lines(p.u.): $Z_{14}=0.0029+j0.0285$,$Z_{45}=0.0228+j0.2280$,$Z_{49}=0.0086+j0.0855$,$Z_{56}=0.0114+j0.1140$,$Z_{36}=0.0200+j0.1995$,$Z_{28}=0.0057+j0.0570$,$Z_{67}=0.002+j0.2280$,$Z_{78}=0.0399+j0.3990$,$Z_{89}=0.0029+j0.0285$,$Z_{90}=0.0029+j0.0285$. Loads are modeled as a constant impedance load with same parameters:$R_\text{load}=0.78$ p.u., $X_\text{load}=0.6$ p.u.; Control parameters of three converters are:  $L_\text{f} = 0.05$ p.u.,$C_\text{f} = 0.05$ p.u., $R_\text{f} = 0.02$ p.u. for filter; PI parameters of current controller:\{0.3,10\}, time constant for voltage forward: 0.001, PI parameters of APC:\{0.5,40\}, PI parameters of RPC:\{0.5,40\}, PLL bandwidth: 20 Hz in Case 1-2 and 2 Hz in Case 3  (unless otherwise specified).
}

\subsection{Proof of Lemma~\ref{lem:eigenvalue_Heq}}\label{App:lemHeq}
The complex power matrix and the grid complex susceptance matrix in (\ref{Eq.23}) can be rewritten as 
\begin{equation}\label{Eq.38}
    \boldsymbol{\tilde S} =  \boldsymbol{S}\text{diag}\{e^{j{\varphi_i}}\} ,\boldsymbol{\tilde Y} = \text{diag}\{ e^{j{\delta _i}}\} \boldsymbol{UBU}\text{diag}\{ e^{-j{\delta _i}}\}
\end{equation}

By combing \eqref{Eq.38},  ${\boldsymbol{H}_\text{eq}} \!=\! {\boldsymbol{\tilde {S}}^{ - 1}}\boldsymbol{\tilde{Y}}{\boldsymbol{\tilde {S}}^{* - 1}}{\boldsymbol{\tilde {Y}}^*}$ can be rewritten
\begin{equation}\label{Eq.39}
\hspace*{-0.7em}  
    {\boldsymbol{H}_\text{eq}}\!=\! \text{diag}\{e^{j{(\delta_i-\varphi_i)}}\}(\boldsymbol{US^{-1}BS^{-1}U})\text{diag}\{e^{j{(\varphi_i-\delta_i)}}\}
\end{equation}
In \eqref{Eq.39}, due to  voltage magnitude and apparent power of converters $U_i>0$ and $S_i>0$, $\boldsymbol{U}$ and $\boldsymbol{S}$ are positive definite. Since $\boldsymbol{B}$  is a grounded Laplacian matrix for grid, $\boldsymbol{B}$ is also positive definite \cite{dorflerKronReductionGraphs2013b}. Thus, $\boldsymbol{UBU}$ and $\boldsymbol{US^{-1}BS^{-1}U}$  are both positive definite. Furthermore, notice that $\text{diag}\{e^{j{(\delta_i-\varphi_i)}}\}$ and $\text{diag}\{e^{j{(\varphi_i-\delta_i)}}\}$are all unitary matrices, their operation on the positive definite matrices and  in  does not change their eigenvalues \cite{zhang2011matrix}. Thus, $\boldsymbol{H}_\text{eq}$ can be proved to be positive definite, and thus its eigenvalues are isolated and positive. 
\hfill$\blacksquare$

{\td
\subsection{Proof of Proposition~\ref{prop:preserveNMPzero}}\label{App:propNMP}

First, let $\boldsymbol{N} = \tilde{\boldsymbol{S}}^{-1}\tilde{\boldsymbol{Y}}$. 
The normalized eigenvector $\boldsymbol{w}_1$ satisfies $\boldsymbol{H}_\text{eq}{w_1} = \boldsymbol{N}{\boldsymbol{N}^*}{w_1} = {\lambda _1}w_1,w_1^H{w_1} = 1$. 
Taking its complex conjugate gives
$\boldsymbol{N}^{*}\boldsymbol{N}w_1^{*}=\lambda_1w_1^{*}$.
Moreover,
$\boldsymbol{N}^{*} \boldsymbol{N}(\boldsymbol{N}^{*}w_1)
=\lambda_1(\boldsymbol{N}^{*}w_1).
$
Thus, both $w_1^*$ and $\boldsymbol{N}^*w_1$ belong to the eigenspace
of $\boldsymbol{N}^*\boldsymbol{N}$ associated with $\lambda_1$.
Since $\lambda_1$ is simple, there exists a scalar $\mu_1$ such that
$
\boldsymbol{N}^{*}w_1=\mu_1^{*}w_1^{*},\boldsymbol{N}w_1^{*}=\mu_1w_1
$,and ${\mu_1} = w_1^H\boldsymbol{N}w_1^*$.
Substituting these relations gives $\boldsymbol{N}{\boldsymbol{N}^*}{w_1} = \mu _1^*\boldsymbol{N}w_1^* = \mu _1^*{\mu _1}{w_1}$ and $({\lambda _1} - \mu _1^*{\mu _1}){w_1} = \mathbb{O}$. Thus, ${\rho _z} = {\lambda _1} = {\mu _1}\mu _1^*$.
\hfill$\blacksquare$
}

{\td
\subsection{Proof of Lemma~\ref{lem:nominalBoundary}}\label{App:lemNominal}
From the definition of $\boldsymbol{\bar J}_{\mathrm{mCIG}}(s)$, we have 
\begin{equation}\label{Eq:JeqCIG_invar}
    \bar{\boldsymbol J}_{\mathrm{mCIG}}(s)D_1=D_1\bar J_{\mathrm{eqCIG}}(s).
\end{equation}

Moreover, the smallest eigenvalue $\lambda_1$ satisfies $\boldsymbol Nw_1^*=\mu_1w_1,
\boldsymbol N^*w_1=\mu_1^*w_1^*$, which gives 
\begin{equation}\label{Eq:JeqNET_invar}
   \boldsymbol{\tilde J}_{\mathrm{mNET}}(s)D_1  = D_1\bar J_{\mathrm{eqNET}}(s)
\end{equation}

Combing \eqref{Eq:JeqCIG_invar} and \eqref{Eq:JeqNET_invar} yields
\begin{equation}
\begin{aligned}
\boldsymbol{\bar C}_{\mathrm{m}}(s)D_1
&=
(\bar{\boldsymbol J}_{\mathrm{mCIG}}(s)+\boldsymbol{\tilde J}_{\mathrm{mNET}}(s))D_1\\
&=
D_1
(\bar J_{\mathrm{eqCIG}}(s)+\bar J_{\mathrm{eqNET}}(s))\\
&=
D_1\bar C_{\mathrm{eq}}(s).
\end{aligned}
\end{equation}
Hence, according to Theorem 5.1 in \cite{stewart1990matrix}, $\mathcal D_1=\operatorname{col}(D_1)$ is the invariant subspace under $\bar{\boldsymbol C}_{\mathrm m}(s)$. If $\bar C_{\mathrm{eq}}(j\omega_c)u_c=\mathbb O$, 
then $\boldsymbol{\bar C}_{\mathrm{m}}(j\omega_c)x_c 
=\boldsymbol{\bar C}_{\mathrm{m}}(j\omega_c)D_1u_c
=D_1\bar C_{\mathrm{eq}}(j\omega_c)u_c=\mathbb O$.
Since $D_1^HD_1=\mathbb I$ and $\|u_c\|_2=1$, one also has
$\|x_c\|_2=1$. Thus, $\boldsymbol{\bar C}_{\mathrm{m}}(j\omega_c)$ is singular whenever
$\bar C_{\mathrm{eq}}(j\omega_c)$ is singular, proving the stated equivalence along the critical subspace. \hfill$\blacksquare$
}

{\td
\subsection{Proof of Proposition~\ref{prop:originalBoundary}}\label{App:propBoundary}
From Lemma~\ref{lem:nominalBoundary}, $\boldsymbol{\bar C}_{\mathrm{m}}(j\omega_c)x_c=\mathbb O, \|x_c\|_2=1$. Using \eqref{Eq.characteristicMatrices} obtains $\boldsymbol{\tilde C}_{\mathrm m}(j\omega_c)x_c 
=({\bar C}_{\mathrm m}(j\omega_c)+{\Delta}(j\omega_c))x_c=\Delta(j\omega_c)x_c$. 
The variational characterization of the minimum singular value gives 
\begin{equation}\label{Eq:singularValueResidual}
 \begin{aligned}
 \underline{\sigma}(\boldsymbol{\tilde C}_{\mathrm{m}}(j\omega_c)) 
&= \min_{\|x\|_2=1}
\left\|
\boldsymbol{\tilde C}_{\mathrm{m}}(j\omega_c)x
\right\|_2 \\
&\leq
\left\|
\tilde{\boldsymbol C}_{\mathrm m}(j\omega_c)x_c
\right\|_2 =
\left\|
\Delta(j\omega_c)x_c
\right\|_2=
\varepsilon_c
\end{aligned}
\end{equation}
Therefore, a sufficiently small normalized directional deviation $\varepsilon_c \approx0$
implies that the original MCPS has approximately singular characteristic matrix at the stability boundary predicted by the critical single-converter subsystem. Combing \eqref{Eq:singularValueResidual} and \eqref{Eq.characteristicBoundary} yields $\underline{\sigma}(\boldsymbol{\tilde C}_{\mathrm{m}}(j\omega_c)) \approx 0$ and $ 1/\|\boldsymbol S_{\mathrm m}(s)\|_{\infty}\approx0$.  
\hfill$\blacksquare$
}

\bibliographystyle{IEEEtran}
\bibliography{IEEEabrv,Bibliography}

@IEEEtranBSTCTL{IEEEexample:BSTcontrol,
    CTLuse_article_number = "yes",
    CTLuse_paper = "yes",
    CTLuse_forced_etal = "yes",        
    CTLmax_names_forced_etal = "3",   
    CTLnames_show_etal = "3",          
    CTLuse_alt_spacing = "yes",
    CTLalt_stretch_factor = "4",
    CTLdash_repeated_names = "yes",
    CTLname_format_string = "{f.~}{vv~}{ll}{, jj}",
    CTLname_latex_cmd = "",
    CTLname_url_prefix = "[Online]. Available:"
}

@article{havre1998effect,
  title={Effect of RHP zeros and poles on the sensitivity functions in multivariable systems},
  author={Havre, Kjetil and Skogestad, Sigurd},
  journal={Journal of Process Control},
  volume={8},
  number={3},
  pages={155--164},
  year={1998},
  publisher={Elsevier}
}

@book{skogestad2005multivariable,
  title={Multivariable feedback control: analysis and design},
  author={Skogestad, Sigurd and Postlethwaite, Ian},
  year={2005},
  publisher={john Wiley \& sons}
}

@article{chen1995sensitivity,
  title={Sensitivity integral relations and design trade-offs in linear multivariable feedback systems},
  author={Chen, Jie},
  journal={IEEE Transactions on Automatic Control},
  volume={40},
  number={10},
  pages={1700--1716},
  year={1995},
  publisher={IEEE}
}

@article{lund2022operating,
  title={Operating wind power plants under weak grid conditions considering voltage stability constraints},
  author={Lund, Torsten and Wu, Heng and Soltani, Hamid and Nielsen, John Godsk and Andersen, Gert Karmisholt and Wang, Xiongfei},
  journal={IEEE Transactions on Power Electronics},
  volume={37},
  number={12},
  pages={15482--15492},
  year={2022},
  publisher={IEEE}
}

@book{zhang2006schur,
  title={The Schur complement and its applications},
  author={Zhang, Fuzhen},
  volume={4},
  year={2006},
  publisher={Springer Science \& Business Media}
}

@book{machowski2020power,
  title={Power system dynamics: stability and control},
  author={Machowski, Jan and Lubosny, Zbigniew and Bialek, Janusz W and Bumby, James R},
  year={2020},
  publisher={John Wiley \& Sons}
}

@article{zhang2010analysis,
  title={Analysis of stability limitations of a VSC-HVDC link using power-synchronization control},
  author={Zhang, Lidong and Nee, Hans-Peter and Harnefors, Lennart},
  journal={IEEE Transactions on Power Systems},
  volume={26},
  number={3},
  pages={1326--1337},
  year={2010},
  publisher={IEEE}
}

@article{yang2019comparison,
  title={Comparison of impedance model and amplitude--phase model for power-electronics-based power system},
  author={Yang, Ziqian and Mei, Cong and Cheng, Shijie and Zhan, Meng},
  journal={IEEE Journal of Emerging and Selected Topics in Power Electronics},
  volume={8},
  number={3},
  pages={2546--2558},
  year={2019},
  publisher={IEEE}
}

@article{he2024complex,
  title={Complex-frequency synchronization of converter-based power systems},
  author={He, Xiuqiang and H{\"a}berle, Verena and D{\"o}rfler, Florian},
  journal={IEEE Transactions on Control of Network Systems},
  volume={12},
  number={1},
  pages={787--799},
  year={2024},
  publisher={IEEE}
}

@article{wang2018harmonic,
  title={Harmonic stability in power electronic-based power systems: Concept, modeling, and analysis},
  author={Wang, Xiongfei and Blaabjerg, Frede},
  journal={IEEE Transactions on Smart Grid},
  volume={10},
  number={3},
  pages={2858--2870},
  year={2018},
  publisher={IEEE}
}

@article{wangSmallSignalStabilityAnalysis2018,
  title = {Small-{{Signal Stability Analysis}} of {{Inverter-Fed Power Systems Using Component Connection Method}}},
  author = {Wang, Yanbo and Wang, Xiongfei and Chen, Zhe and Blaabjerg, Frede},
  year = {2018},
  month = {sep},
  journal = {IEEE Transactions on Smart Grid},
  volume = {9},
  number = {5},
  pages = {5301--5310},
  issn = {1949-3061},
  doi = {10.1109/TSG.2017.2686841},
  langid = {american}
}

@article{harneforsModelingThreePhaseDynamic2007,
  title = {Modeling of {{Three-Phase Dynamic Systems Using Complex Transfer Functions}} and {{Transfer Matrices}}},
  author = {Harnefors, Lennart},
  year = 2007,
  month = aug,
  journal = {IEEE Transactions on Industrial Electronics},
  volume = {54},
  number = {4},
  pages = {2239--2248},
  issn = {1557-9948},
  doi = {10.1109/TIE.2007.894769},
  urldate = {2024-12-27},
  langid = {american}
}

@article{huangSmallDisturbanceVoltageStability2020,
  title = {Small-{{Disturbance Voltage Stability}} of {{Power Systems}}: {{Dependence}} on {{Network Structure}}},
  shorttitle = {Small-{{Disturbance Voltage Stability}} of {{Power Systems}}},
  author = {Huang, Wanjun and Hill, David J. and Zhang, Xinran},
  year = {2020},
  month = {jul},
  journal = {IEEE Transactions on Power Systems},
  volume = {35},
  number = {4},
  pages = {2609--2618},
  issn = {1558-0679},
  doi = {10.1109/TPWRS.2019.2962555},
  langid = {american}
}

@article{wang2020grid,
  title={Grid-synchronization stability of converter-based resources—An overview},
  author={Wang, Xiongfei and Taul, Mads Graungaard and Wu, Heng and Liao, Yicheng and Blaabjerg, Frede and Harnefors, Lennart},
  journal={IEEE Open Journal of Industry Applications},
  volume={1},
  pages={115--134},
  year={2020},
  publisher={IEEE}
}

@article{cheng2022real,
  title={Real-world subsynchronous oscillation events in power grids with high penetrations of inverter-based resources},
  author={Cheng, Yunzhi and Fan, Lingling and Rose, Jonathan and Huang, Shun-Hsien and Schmall, John and Wang, Xiaoyu and Xie, Xiaorong and Shair, Jan and Ramamurthy, Jayanth R and Modi, Nilesh and others},
  journal={IEEE Transactions on Power Systems},
  volume={38},
  number={1},
  pages={316--330},
  year={2022},
  publisher={IEEE}
}

@article{gu2022power,
  title={Power system stability with a high penetration of inverter-based resources},
  author={Gu, Yunjie and Green, Timothy C},
  journal={Proceedings of the IEEE},
  volume={111},
  number={7},
  pages={832--853},
  year={2022},
  publisher={IEEE}
}

@article{dong2018small,
  title={Small signal stability analysis of multi-infeed power electronic systems based on grid strength assessment},
  author={Dong, Wei and Xin, Huanhai and Wu, Di and Huang, Linbin},
  journal={IEEE transactions on Power Systems},
  volume={34},
  number={2},
  pages={1393--1403},
  year={2018},
  publisher={IEEE}
}

@article{peng2025measuring,
  title={Measuring Short-Term Voltage Stability of Power Systems Dominated by Inverter-Based Resources Part I: System-Wise Generalized Voltage Damping Index},
  author={Peng, Xiaoyu and Liu, Feng and Yang, Peng and Yu, Peixin and Luo, Kui and Wang, Zhaojian},
  journal={Journal of Modern Power Systems and Clean Energy},
  year={2025},
  publisher={SGEPRI}
}

@book{zhang2011matrix,
  title={Matrix theory: basic results and techniques},
  author={Zhang, Fuzhen},
  year={2011},
  publisher={Springer Science \& Business Media}
}

@article{dorflerKronReductionGraphs2013b,
  title = {Kron {{Reduction}} of {{Graphs With Applications}} to {{Electrical Networks}}},
  author = {Dorfler, Florian and Bullo, Francesco},
  year = {2013},
  journal = {IEEE Transactions on Circuits and Systems I: Regular Papers},
  volume = {60},
  number = {1},
  pages = {150--163},
  issn = {1558-0806},
  doi = {10.1109/TCSI.2012.2215780},
  eventtitle = {{{IEEE Transactions}} on {{Circuits}} and {{Systems I}}: {{Regular Papers}}},
  issue = {1}
}

@book{conseilinternationaldesgrandsreseauxelectriquesConnectionWindFarms2016,
  title = {Connection of Wind Farms to Weak {{AC}} Networks},
  editor = {{Conseil international des grands réseaux électriques}},
  year = {2016},
  publisher = {CIGRÉ},
  location = {Paris},
  isbn = {978-2-85873-374-3},
  langid = {english}
}

@standard{IEEEStandardInterconnection,
  title = {{{IEEE Standard}} for {{Interconnection}} and {{Interoperability}} of {{Inverter-Based Resources}} ({{IBRs}}) {{Interconnecting}} with {{Associated Transmission Electric Power Systems}}},
  institution = {IEEE},
  year = {2022},
  publisher = {IEEE},
  doi = {10.1109/IEEESTD.2022.9762253},
  isbn = {9781504484626},
  langid = {english}
}

@article{morrisAnalysisControllerBandwidth2021,
  title = {Analysis of {{Controller Bandwidth Interactions}} for {{Vector-Controlled VSC Connected}} to {{Very Weak AC Grids}}},
  author = {Morris, Jennifer F. and Ahmed, Khaled H. and {Egea-Alvarez}, Agusti},
  year = {2021},
  month = dec,
  journal = {IEEE Journal of Emerging and Selected Topics in Power Electronics},
  volume = {9},
  number = {6},
  pages = {7343--7354},
  issn = {2168-6777, 2168-6785},
  doi = {10.1109/JESTPE.2020.3031203},
  urldate = {2024-01-04},
  langid = {english}
}

@article{wuAssessingImpactRenewable2018a,
  title = {Assessing {{Impact}} of {{Renewable Energy Integration}} on {{System Strength Using Site-Dependent Short Circuit Ratio}}},
  author = {Wu, Di and Li, Gangan and Javadi, Milad and Malyscheff, Alexander M. and Hong, Mingguo and Jiang, John Ning},
  year = 2018,
  month = jul,
  journal = {IEEE Transactions on Sustainable Energy},
  volume = {9},
  number = {3},
  pages = {1072--1080},
  issn = {1949-3029, 1949-3037},
  doi = {10.1109/TSTE.2017.2764871},
  urldate = {2023-01-06},
  langid = {english}
}

@article{hatziargyriouDefinitionClassificationPower2021,
  title = {Definition and {{Classification}} of {{Power System Stability}} – {{Revisited}} \& {{Extended}}},
  author = {Hatziargyriou, Nikos and Milanovic, Jovica and Rahmann, Claudia and Ajjarapu, Venkataramana and Canizares, Claudio and Erlich, Istvan and Hill, David and Hiskens, Ian and Kamwa, Innocent and Pal, Bikash and Pourbeik, Pouyan and Sanchez-Gasca, Juan and Stankovic, Aleksandar and Van Cutsem, Thierry and Vittal, Vijay and Vournas, Costas},
  year = {2021},
  journal = {IEEE Transactions on Power Systems},
  shortjournal = {IEEE Trans. Power Syst.},
  volume = {36},
  number = {4},
  pages = {3271--3281},
  issn = {0885-8950, 1558-0679},
  doi = {10.1109/TPWRS.2020.3041774},
  langid = {english}
}

@article{xiaoDesignOrientedSmallSignalStability2025,
  title = {Design-{{Oriented Small-Signal Stability Assessment}} of {{Grid-Following Inverter-Based Resources Based}} on {{Virtual Dynamic Short-Circuit Ratio}}},
  author = {Xiao, Yu and Zhang, Xing and Zhan, Xiangdui and Cao, Renxian},
  year = 2025,
  journal = {IEEE Transactions on Industrial Electronics},
  pages = {1--12},
  issn = {1557-9948},
  doi = {10.1109/TIE.2025.3528509},
  urldate = {2025-02-17}
}

@article{zhuImpedanceMarginRatio2024,
  title = {Impedance {{Margin Ratio}}: A {{New Metric}} for {{Small-Signal System Strength}}},
  shorttitle = {Impedance {{Margin Ratio}}},
  author = {Zhu, Yue and Green, Timothy C. and Zhou, Xiaoyao and Li, Yitong and Kong, Dechao and Gu, Yunjie},
  year = 2024,
  journal = {IEEE Transactions on Power Systems},
  pages = {1--13},
  issn = {0885-8950, 1558-0679},
  doi = {10.1109/TPWRS.2024.3371231},
  urldate = {2024-04-18},
  copyright = {https://ieeexplore.ieee.org/Xplorehelp/downloads/license-information/IEEE.html},
  langid = {english}
}

@article{wuImpactNonMinimumPhaseZeros2021b,
  title = {Impact of {{Non-Minimum-Phase Zeros}} on the {{Weak-Grid-Tied VSC}}},
  author = {Wu, Guanglu and Zhao, Bing and Zhang, Xi and Wang, Shanshan and {Egea-Alvarez}, Agusti and Sun, Yuanyuan and Li, Yingbiao and Guo, Deyang and Zhou, Xiaoxin},
  year = 2021,
  month = apr,
  journal = {IEEE Transactions on Sustainable Energy},
  volume = {12},
  number = {2},
  pages = {1115--1126},
  issn = {1949-3029, 1949-3037},
  doi = {10.1109/TSTE.2020.3034791},
  urldate = {2025-01-09},
  copyright = {https://ieeexplore.ieee.org/Xplorehelp/downloads/license-information/IEEE.html},
  langid = {english}
}

@article{camposNovelGramianbasedStructurepreserving2022,
  title = {Novel {{Gramian-based Structure-preserving Model Order Reduction}} for {{Power Systems}} with {{High Penetration}} of {{Power Converters}}},
  author = {Campos, Nathalia de Morais Dias and Sarnet, Tanel and Kilter, Jako},
  year = 2022,
  journal = {IEEE Transactions on Power Systems},
  pages = {1--11},
  issn = {1558-0679},
  doi = {10.1109/TPWRS.2022.3228458},
  langid = {american}
}

@article{freudenberg1985right,
  author  = {Freudenberg, J. S. and Looze, D. P.},
  title   = {Right-half Plane Zeros and Poles and Design Tradeoffs in Feedback Systems},
  journal = {IEEE Transactions on Automatic Control},
  volume  = {AC-30},
  number  = {6},
  pages   = {555--565},
  month   = jun,
  year    = {1985}
}

@book{stewart1990matrix,
  author    = {Stewart, G. W. and Sun, Ji-guang},
  title     = {Matrix Perturbation Theory},
  series    = {Computer Science and Scientific Computing},
  publisher = {Academic Press},
  address   = {Boston, MA},
  year      = {1990},
  isbn      = {0-12-670230-6}
}

@article{Liu2024gOSCR,
  author={Liu, Chenxi and Xin, Huanhai and Wu, Di and Gao, Huisheng and Yuan, Hui and Zhou, Yuhan},
  journal={IEEE Transactions on Power Systems}, 
  title={Generalized Operational Short-Circuit Ratio for Grid Strength Assessment in Power Systems With High Renewable Penetration}, 
  year={2024},
  volume={39},
  number={4},
  pages={5479-5494},
  doi={10.1109/TPWRS.2023.3340158}}

@article{mohammedComparisonPLLBased2022,
  author  = {Nabil Mohammed and Weihua Zhou and Behrooz Bahrani},
  title   = {Comparison of {PLL}-Based and {PLL}-Less Control Strategies
             for Grid-Following Inverters Considering Time and Frequency
             Domain Analysis},
  journal = {IEEE Access},
  year    = {2022},
  volume  = {10},
  pages   = {80518--80538},
  doi     = {10.1109/ACCESS.2022.3195494}
}

@article{mohammedEnhancedFrequency2023,
  author  = {Nabil Mohammed and Mohammad Hasan Ravanji and Weihua Zhou
             and Behrooz Bahrani},
  title   = {Enhanced Frequency Control for Power-Synchronized
             {PLL}-Less Grid-Following Inverters},
  journal = {IEEE Open Journal of the Industrial Electronics Society},
  year    = {2023},
  volume  = {4},
  pages   = {189--204},
  doi     = {10.1109/OJIES.2023.3285010}
}

@article{liuOscillatoryStabilityCriterion2018,
  author  = {Liu, Huakun and Xie, Xiaorong and Liu, Wei},
  title   = {An Oscillatory Stability Criterion Based on the Unified
             {$dq$}-Frame Impedance Network Model for Power Systems With
             High-Penetration Renewables},
  journal = {IEEE Transactions on Power Systems},
  year    = {2018},
  volume  = {33},
  number  = {3},
  pages   = {3472--3485},
  month   = may,
  doi     = {10.1109/TPWRS.2018.2794067}
}
\vspace{-4.6em}
\begin{IEEEbiographynophoto}{Fuyilong Ma}
(Member, IEEE) received the B.Eng. and Ph.D. degrees from Zhejiang University, Hangzhou, China, in 2019 and 2024, respectively. Currently, he is a postdoc with Electric Power Research
Institute, China Southern Power Grid, Guangzhou, China. 
His research interests include power-electronic-dominated power system stability, dynamic networks and control theory. 
\end{IEEEbiographynophoto}
\vspace{-2em}
\begin{IEEEbiographynophoto}{Lidong Zhang}
(Senior Member, IEEE) received the B.Sc. from NCEPU, Baoding, China, in 1991, and Licentiate degree from Chalmers in 1999, and Ph.D. degrees from in electrical engineering from the Royal Institute of Technology (KTH), Stockholm, Sweden, in 2010, all in electrical engineering. From 1991 to 1997, he was with NCEPU, Beijing, China, from 2001 as a lecture of electrical engineering. From 1999 to 2010, he joined ABB Power Systems, Ludvika, Sweden, as a senior engineer. In 2010, he joined ABB Corporate Research, Västerås, where he was appointed as a Principal Scientist in 2012. In 2021, joined Hitachi Energy in Beijing China. In 2023, he joined China Southern Grid as a strategy specialist. His research interests include control and dynamic analysis of HVDC and FACTS, particularly grid-connected converters.
\end{IEEEbiographynophoto}
\vspace{-2em}
\begin{IEEEbiographynophoto}{Wangqianyun Tang}
 received the B.Eng. and Ph.D. degrees from the School of Electrical and Electronic Engineering, Huazhong University of Science and Technology, Wuhan, China, in 2014 and 2020, respectively. From 2020 to 2022, she was a Postdoctor with China Southern Power Grid Electric Power Research Institute (SEPRI). She is currently a Senior Engineer and Researcher with SEPRI. She is also an expert in IEC SC8A JWG5. Her research interests include modeling and control of renewable energy and stability analysis of power-electronized power system.
\end{IEEEbiographynophoto}
\vspace{-2em}
\begin{IEEEbiographynophoto}{Waisheng Zheng}
is currently a Professor level Senior Engineer and the Project Leader of National Natural Science Foundation of China. He is also the Chairman of SEPRI and the General Manager of strategic planning department, China Southern Power Grid, Guangzhou, China. His research interests include power system planning, scheduling, and smart grid.
\end{IEEEbiographynophoto}
\vspace{-2em}
\begin{IEEEbiographynophoto}{Huanhai Xin}
 (Senior Member, IEEE) received the Ph.D. degree from the College of Electrical Engineering, Zhejiang University, Hangzhou, China, in 2007. He was a Postdoctoral Researcher with the Department of Electrical Engineering and Computer Science, University of Central Florida, Orlando, FL, USA, from 2009 to 2010. He is currently a Professor with the College of Electrical Engineering, Zhejiang University. His research interests include renewable power system stability analysis and control. He was the recipient of the National Natural Science of China for Excellent Young Scholars.
\end{IEEEbiographynophoto}
\vspace{-2em}
\begin{IEEEbiographynophoto}{Linbin Huang}
 (Member, IEEE) received the B.Eng. and Ph.D. degrees from Zhejiang University, Hangzhou, China, in 2015 and 2020, respectively. From 2020 to 2024, he was a Senior Scientist with Automatic Control Laboratory, ETH Zurich, Zurich, Switzerland. He is currently a Professor with the College of Electrical Engineering, Zhejiang University. His research interests include power system stability, optimal control of power electron ics, and data-driven control.
\end{IEEEbiographynophoto}
\vspace{-2em}
\begin{IEEEbiographynophoto}{Lennart Harnefors}
 (Fellow, IEEE) received the M.Sc., Licentiate, and Ph.D. degrees in electrical engineering from the Royal Institute of Technology (KTH), Stockholm, Sweden, in 1993, 1995, and 1997, respectively, and the Docent (D.Sc.) degree in industrial automation from Lund University, Lund, Sweden, in 2000. From 1994 to 2005, he was with Mälardalen University, Västerås, Sweden, from 2001 as a Professor of Electrical Engineering. From 2001 to 2005, he was, in addition, a part-time Visiting Professor of Electrical Drives with the Chalmers University of Technology, Göteborg, Sweden. In 2005, he joined ABB, HVDC Product Group, Ludvika, Sweden, where, among other duties, he led the control development of the first generation of multilevel-converter HVDC Light. In 2012, he joined ABB, Corporate Research, Västerås, where he was appointed as a Senior Principal Scientist in 2013 and as a Corporate Research Fellow in 2021. His research interests include control and dynamic analysis of power electronic systems, particularly grid-connected converters, and ac drives. He is, in addition, a part-time Visiting Professor with Aalto University, Espoo, Finland. He was the recipient of the 2020 IEEE Modeling and Control Technical Achievement Award. He is an Editor of the IEEE Journal of Emerging and Selected Topics in Power Electronics.
\end{IEEEbiographynophoto}

\vfill
\end{document}